\newtheorem{mydef}{Definition}
\def\BibTeX{{\rm B\kern-.05em{\sc i\kern-.025em b}\kern-.08em
		T\kern-.1667em\lower.7ex\hbox{E}\kern-.125emX}}
\newcommand{\removelatexerror}{\let\@latex@error\@gobble}
\begin{document}
	\title{Local Differential Privacy for Federated Learning}
	%
	%
	\author{M.A.P. Chamikara\inst{1,2}\orcidID{0000-0002-4286-3774} \and
		Dongxi Liu \inst{1} \and
		Seyit Camtepe\inst{1} \and
		Surya Nepal\inst{1} \and
		Marthie Grobler\inst{1} \and
		Peter Bertok\inst{3} \and
		Ibrahim Khalil\inst{3}
	}
	\authorrunning{M.A.P. Chamikara et al.}
	%
	\institute{CSIRO's Data61, Australia \and
		Cyber Security Cooperative Research Centre, Australia
		\and
		RMIT University, Australia
	}
	\maketitle              
	\begin{abstract} 
		Advanced adversarial attacks such as membership inference and model memorization can make federated learning (FL) vulnerable and potentially leak sensitive private data. Local differentially private (LDP) approaches are gaining more popularity due to stronger privacy notions and native support for data distribution compared to other differentially private (DP) solutions. However, DP approaches assume that the FL server (that aggregates the models) is honest (run the FL protocol honestly) or semi-honest (run the FL protocol honestly while also trying to learn as much information as possible). These assumptions make such approaches unrealistic and unreliable for real-world settings. Besides, in real-world industrial environments (e.g., healthcare), the distributed entities (e.g., hospitals) are already composed of locally running machine learning models (this setting is also referred to as the cross-silo setting). Existing approaches do not provide a scalable mechanism for privacy-preserving FL to be utilized under such settings, potentially with untrusted parties. This paper proposes a new local differentially private FL (named LDPFL) protocol for industrial settings. LDPFL can run in industrial settings with untrusted entities while enforcing stronger privacy guarantees than existing approaches. LDPFL shows high FL model performance (up to 98\%) under small privacy budgets (e.g., $\varepsilon = 0.5$) in comparison to existing methods. 
		
		\keywords{Federated Learning \and distributed machine learning \and differential privacy \and local differential privacy \and privacy preserving federated learning \and privacy preserving distributed machine learning.}
		
	\end{abstract}

	\section{Introduction}
	\label{introsection}
	Server-centric machine learning (ML) architectures cannot address the massive data distribution in the latest technologies utilized by many industries (cross-silo setting), including healthcare and smart agriculture. Besides, collecting data from such industries to one central server for ML introduces many privacy concerns~\cite{arachchige2019local}. Federated learning (FL) is a recently developed distributed machine learning approach that provides an effective solution to privacy-preserving ML~\cite{li2020federated}. FL lets clients (participants) collect and process data to train a local ML model. The clients are then only required to share the model parameters of the locally trained ML models with a central server for parameter aggregation to generate a global representation of all client models. Finally, the server shares the global model with all participating clients. In this way, FL bypasses the necessity of sharing raw data with any other party involved in the ML training process. However, the model parameters of the locally shared models can still leak private information under certain conditions~\cite{wei2020federated}. Hence, FL on sensitive data such as biometric images, health records, and financial records still poses privacy risks if proper privacy-preservation mechanisms are not imposed.
	
	Cryptographic scenarios and noise addition (randomization) mechanisms have been developed to mitigate the privacy leaks associated with  FL~\cite{zhang2020privcoll,wei2020federated}. Two of FL's most frequently tested cryptographic approaches are secure multi-party computation (SMC) and homomorphic encryption. However, cryptographic approaches tend to reduce FL performance drastically due to their high computational and communication cost~\cite{zhang2020privcoll, bonawitz2017practical}. Most cryptographic approaches assume semi-honest (honest but curious) computations at specific points of the FL process. A semi-honest entity is assumed to conduct computations honestly; however, curious to learn as much information possible~\cite{fereidooni2021safelearn}. Among noise addition approaches, differentially private approaches are more preferred due to the robust privacy guarantees and high efficiency~\cite{wei2020federated,geyer2017differentially}. In global differential privacy (GDP), a trusted curator applies calibrated noise ~\cite{xiao2008output, kairouz2014extremal}, whereas, in local differential privacy (LDP), the data owner perturbs their data before releasing them to any third party~\cite{kairouz2014extremal}. Hence, LDP provides higher levels of privacy as it imposes more noise compared to GDP~\cite{arachchige2019local}. Most existing approaches for FL are based on GDP~\cite{geyer2017differentially}. However, the requirement of a trusted party makes GDP approaches less practical, whereas LDP approaches provide a more practical mode of dealing with the distributed clients in FL. Previous approaches try to impose LDP by applying noise/randomization over the model parameters of the local models~\cite{sun2020ldp}. However, most of these LDP approaches for FL cannot control the privacy budgets efficiently due to the extreme dimensionality of parameter matrices of underlying deep learning models~\cite{sun2020ldp}. For example, the LDP approach for FL proposed in ~\cite{wei2020federated} utilizes extensive $\varepsilon$ values to enable sufficient utility challenging its use in practical settings. Besides, existing LDP approaches are not rigorously tested against more complex datasets~\cite{wei2020federated,truex2020ldp}. Moreover, the weight distribution in different layers of the models has not been explicitly considered during the application of LDP~\cite{sun2020ldp}.
	
	We propose a novel local differentially private federated learning approach (named LDPFL: Local Differential Privacy for Federated Learning) for cross-silo settings that alleviate the issues of existing approaches. LDPFL solves the complexity of applying LDP over high dimensional parameter matrices by applying randomization over a 1D vector generated from the intermediate output of a locally trained model. LDPFL applies randomization over this 1D vector and trains a second fully connected deep neural network as the client model of an FL setting. The randomization mechanism in LDPFL utilizes randomized response (RR)~\cite{fox2015randomized} and optimized unary encoding (OUE)~\cite{wang2017locally} to guarantee differential privacy of FL. Since LDPFL randomizes the inputs to the local models rather than altering the weights of the local models, LDPFL can provide better flexibility in choosing randomization, privacy composition, and model convergence compared to existing LDP approaches for FL. Compared to previous approaches (SMC~\cite{bonawitz2017practical,truex2020ldp}, GDP - DPSGD)~\cite{abadi2016deep,truex2020ldp},   and LDP - $\alpha$-CLDP-Fed~\cite{truex2020ldp,gursoy2019secure}), our empirical analysis shows that  LDPFL performs better and achieves accuracy up
	to 98\% under extreme cases of privacy budgets (e.g., $\varepsilon=0.5$), ensuring a minimal privacy leak.
	
	\section{Background}
	\label{background}
	This section provides brief descriptions of the preliminaries used in LDPFL that were proposed for privacy-preserving federated learning on deep learning in a cross-silo setting. LDPFL utilizes the concepts of local differential privacy (LDP), Randomized Aggregatable Privacy-Preserving Ordinal Response - RAPPOR (an LDP protocol based on randomized response for binary vectors), and optimized unary encoding (which is an optimization on RAPPOR for better utility). 
	
	\subsection{Federated Learning}
	\label{fedlearning}
	
	Federated learning (FL)~\cite{mcmahan2016federated} involves $N$ distributed parties (connected to a central server) agreed on training local deep neural network (DNN) models with the same configuration. The process starts with the central server randomly initializing the model parameters, $\mathcal{M}_0$, and distributing them to the clients to initialize their copy of the model. The clients train their local model separately using the data in their local repository for several local epochs and share the updated model parameters, $M_u$, with the server. The server aggregates the model parameters received from all clients using an aggregation protocol such as federated averaging to generate the federated model $(\mathcal{ML}_{fed})$. Equation~\ref{fedavg} shows the process of federated averaging (calculating the average of values in each index of parameter matrices) to generate $(\mathcal{ML}_{fed})$, where $\mathcal{M}_{u, i}$ represents the updated model parameters sent by $i^{th}$ client. This is called one federation round. FL conducts multiple federation rounds until $(\mathcal{ML}_{fed})$ converges or the pre-defined number of rounds is reached. It was shown that $(\mathcal{ML}_{fed})$ produces accuracy as almost as close to a model centrally trained with the same data~\cite{yang2019federated}.
	
	\begin{equation}
		\label{fedavg}
		\mathcal{ML}_{fed}=\frac{1}{N} \sum_{i} \mathcal{M}_{u, i}
	\end{equation}
	
	\subsection{Local Differential Privacy}
	
	Local differential privacy (LDP) is the setting where the data owners apply randomization (or noise) on the input data before the data curator gains access to them. LDP provides a better privacy notion compared to GDP due to the increased noise levels and nonnecessity of a trusted curator. LDP is deemed to be the state-of-the-art approach for privacy-preserving data collection and distribution. A randomized algorithm $\mathcal{A}$ provides $\varepsilon$-local differential privacy if Equation \eqref{ldpeq} holds~\cite{erlingsson2014rappor}.
	
	\begin{mydef}
		A randomized algorithm $\mathcal{A}$  satisfies $\varepsilon$-local differential privacy if for all pairs of client's values $v_1$ and $v_2$ and for all $Q \subseteq Range(\mathcal{A})$ and for ($\varepsilon \geq 0$), Equation \eqref{ldpeq} holds. $Range(\mathcal{A})$ is the set of all possible outputs of the randomized algorithm $A$.
	\end{mydef}
	\begin{equation}
		Pr[\mathcal{A}(v_1) \in Q] \leq \exp(\varepsilon) Pr[\mathcal{A}(v_2) \in Q]
		\label{ldpeq}
	\end{equation}
	
	\subsection{Randomized Aggregatable Privacy-Preserving
		Ordinal Response (RAPPOR)}
	
	RAPPOR is an LDP algorithm proposed by Google based on the problem of estimating a client-side distribution of string values drawn from a discrete data dictionary~\cite{erlingsson2014rappor}. Basic RAPPOR takes an input, $x_i (\in \mathbb{N}^d)$, that is encoded into a binary string, $\boldsymbol{B}$ of $d$ bits. Each $d$-bit vector contains $d-1$ zeros with one bit at position $v$ set to $1$. Next, $\boldsymbol{B}$ is randomized to obtain $\boldsymbol{B'}$ satisfying DP.
	
	\subsubsection{Sensitivity}
	The sensitivity, $\Delta f$ of a function, $f$, is considered to be the maximum influence that a single individual can have on $f$. In the LDP setting, which involves encoding, this can be represented as given in Equation \eqref{seneq}, where $x_i$ and $x_{i+1}$ are two adjacent inputs, and $f$ represents the encoding. $\lVert . \rVert_1$ represents the $L1$ norm of a vector~\cite{wang2016using}. $\Delta f$ in RAPPOR is 2 as the maximum difference between two adjacent encoded bit strings ($\boldsymbol{B}(x_i)$ and $\boldsymbol{B}(x_{i+1})$) is only two bits. 
	\begin{equation}
		\Delta f=max\{\lVert f(x_i)-f(x_{i+1}) \rVert_1\}
		\label{seneq}
	\end{equation}
	
	\subsubsection{Bit randomization probability}
	\label{rappor}
	Take $p$ to be the probability of preserving the actual value of an original bit in an input bit-string. $p$ follows Equation \eqref{rapeq}, where $\varepsilon$ is the privacy budget offered by the LDP process, as proven by RAPPOR~\cite{erlingsson2014rappor,qin2016heavy}. 
	
	\begin{equation}
		p=\frac{e^{\frac{\varepsilon}{\Delta f}}}{1+e^\frac{\varepsilon}{\Delta f}}=\frac{e^{\frac{\varepsilon}{2}}}{1+e^\frac{\varepsilon}{2}}
		\label{rapeq}
	\end{equation}
	
	\subsection{Optimized Unary Encoding}
	
	Assume that the binary encoding used in RAPPOR (also referred to as Unary Encoding~\cite{wang2017locally}) encodes an input instance $x_i$ into its binary representation $\textbf{B}$, which is a $d$ bit binary vector. Let $\textbf{B}[i]$ be the $i^{th}$ bit and $\textbf{B}^{\prime}[i]$ is the perturbed $i^{th}$ bit. Assume that, one bit at position $v$ of $\textbf{B}$ is set to 1, whereas the other bits are set to zero. Unary Encoding (UE)~\cite{erlingsson2014rappor} perturbs the bits of $\textbf{B}$ according to Equation~\ref{pereq11}. 
	
	\begin{equation}
		\operatorname{Pr}\left[\textbf{B}^{\prime}[i]=1\right]=\left\{\begin{array}{ll}{p,} & {\text { if } \textbf{B}[i]=1} \\ {q,} & {\text { if } \textbf{B}[i]=0}\end{array}\right.
		\label{pereq11}
	\end{equation}
	
	UE satisfies $\varepsilon$-LDP~\cite{erlingsson2014rappor,wang2017locally} for,
	
	\begin{equation}
		\varepsilon=\ln \left(\frac{p(1-q)}{(1-p) q}\right)
		\label{pqratio}
	\end{equation}
	
	This can be proven as done in~\cite{erlingsson2014rappor,wang2017locally} for any bit positions, $v_1$, $v_2$ (of the encoded inputs, $x_1$ and $x_2$, respectively), and output $\textbf{B}$ with sensitivity = 2.

	\begin{proof}
		Considering a sensitivity of 2, choose $p$ and $q$ as follows,
		
		\begin{equation}
			p = \frac{ e^{\frac{\varepsilon}{2}}}{1+ e^{\frac{\varepsilon}{2}}} 
			\label{prob1aa}
		\end{equation}
		\begin{equation}
			q = \frac{1}{1+ e^{\frac{\varepsilon}{2}}} 
			\label{prob2aa}
		\end{equation}
		
		\begin{equation}
			\resizebox{.5\hsize}{!}{$
				\begin{aligned} \frac{\operatorname{Pr}\left[\boldsymbol{B} | v_{1}\right]}{\operatorname{Pr}\left[\boldsymbol{B} | v_{2}\right]} &=\frac{\prod_{i \in[d]} \operatorname{Pr}\left[\boldsymbol{B}[i] | v_{1}\right]}{\prod_{i \in[d]} \operatorname{Pr}\left[\boldsymbol{B}[i] | v_{2}\right]} \\ & \leq \frac{\operatorname{Pr}\left[\boldsymbol{B}\left[v_{1}\right]=1 | v_{1}\right] \operatorname{Pr}\left[\boldsymbol{B}\left[v_{2}\right]=0 | v_{1}\right]}{\operatorname{Pr}\left[\boldsymbol{B}\left[v_{1}\right]=1 | v_{2}\right] \operatorname{Pr}\left[\boldsymbol{B}\left[v_{2}\right]=0 | v_{2}\right]} \\ &=\frac{p}{q} \cdot \frac{1-q}{1-p}=e^{\varepsilon} \end{aligned}
				$}
			\label{moueiproof}
		\end{equation}
		
		Each bit (in a $d$-bit vector) is flipped independently. Equation~\ref{moueiproof}, represents the state where any inputs, $x_1$ and $x_2$ result in bit-vectors that differ only in bit positions $v_1$ and $v_2$. The maximum of this ratio is when $v_1$ is 1 and $v_2$ is 0 as represented by Equation~\ref{moueiproof}.
		
	\end{proof}
	
	Optimized Unary Encoding (OUE) introduces a utility enhancement to Unary Encoding by perturbing 0s and 1s differently. When $\textbf{B}$ is a long binary vector, the number of 0s is significantly greater than the number of 1s in $\textbf{B}$. OUE introduces a mechanism to reduce the probability of perturbing 0 to 1 ($p_{0\rightarrow 1}$). By setting $p=\frac{1}{2}$ and $q=\frac{1}{1+e^\varepsilon}$, OUE improves the budget allocation for transmitting the 0 bits in their original state as much as possible. Following Equation~\ref{moueiproof}, OUE provides $\varepsilon$-LDP when $p=\frac{1}{2}$, $q=\frac{1}{1+e^\varepsilon}$, and  sensitivity = 2~\cite{wang2017locally}.
	
	\subsection{Postprocessing invariance/robustness and composition}
	\label{propdp}
	
	Any additional computations on the outcomes of a DP algorithm do not weaken its privacy guarantees. This property is called the postprocessing invariance/robustness in DP. A processed outcome of a $\varepsilon$-DP algorithm still provides $\varepsilon$-DP. Composition is another property of DP that captures the degradation of privacy when multiple differentially private algorithms are performed on the same or overlapping datasets~\cite{bun2016concentrated}. When two DP algorithms $\varepsilon_1$-DP and $\varepsilon_2$-DP are applied to the same or overlapping datasets, the union of the results is equal to $(\varepsilon_1+\varepsilon_2)$-DP~\cite{bun2016concentrated}. In parallel composition, if a set of DP algorithms ($M_1,M_2,\dots, M_n$) are applied on a dataset $D$ divided into disjoint subsets of $D_1, D_2,\dots, D_n$, respectively (so that $M_i$ provides $\varepsilon_i-DP$ for every $D_i$), the whole process will provide $max\{\varepsilon_1, \varepsilon_2, \dots, \varepsilon_n\}-DP$ on the entire dataset~\cite{zhao2019differential}.
	
	\section{Our Approach}
	\label{approach}
	The proposed approach (to solve the issues raised in Section~\ref{introsection}) is abbreviated as LDPFL (Local Differential Privacy for Federated Learning). Fig.~\ref{ldpflarchitecture} shows the architecture of LDPFL. As shown in Fig.~\ref{ldpflflow}, a client in LDPFL has three main tasks; (1) Generating a fully trained CNN using the local private data (refer to step 1 in Fig.~\ref{ldpflflow}), (2) Generating flattened 1-D vectors of inputs and randomizing them to enforce DP (refer to step 2 and 3 in Fig.~\ref{ldpflflow}),  and (3) Conducting federated learning over randomized data (refer to step 4 in Fig.~\ref{ldpflflow}). In the proposed setting, we assume the clients to be large-scale entities such as banks and hospitals (cross-silo setting), and any data owner would share private data with only one client in the distributed setting (i.e., input data instances are independent). Each client has a private local learning setup where fully trained models are maintained on locally-owned private data. To generalize the models, the clients collaborate with other clients (e.g., hospitals with other hospitals working on similar domains of data) through LDPFL. Each client uses their locally trained CNNs to obtain flattened vectors of the input instances, which are then encoded to binary vectors and randomized to enforce DP. The randomized inputs are then used to train a global model (GM) using federated learning. The following sections provide detailed descriptions of the overall process of LDPFL.
	
	\vspace{-0.5cm}
	\begin{figure}[H] 
		\centering
		\scalebox{0.9}{
			\subfloat[The architecture of LDPFL. \textbf{DO}: data owner, \textbf{CNN}: convolutional neural network \textbf{CM}: convolutional module of the CNN, \textbf{FN}: fully connected network module of the CNN, \textbf{FLT}: input flattening layer, \textbf{RND}: randomization layer, \textbf{DNN}: deep neural network, \textbf{GM}: global model.]{\includegraphics[width=0.70\textwidth, trim=0cm 0cm 0cm 0cm]{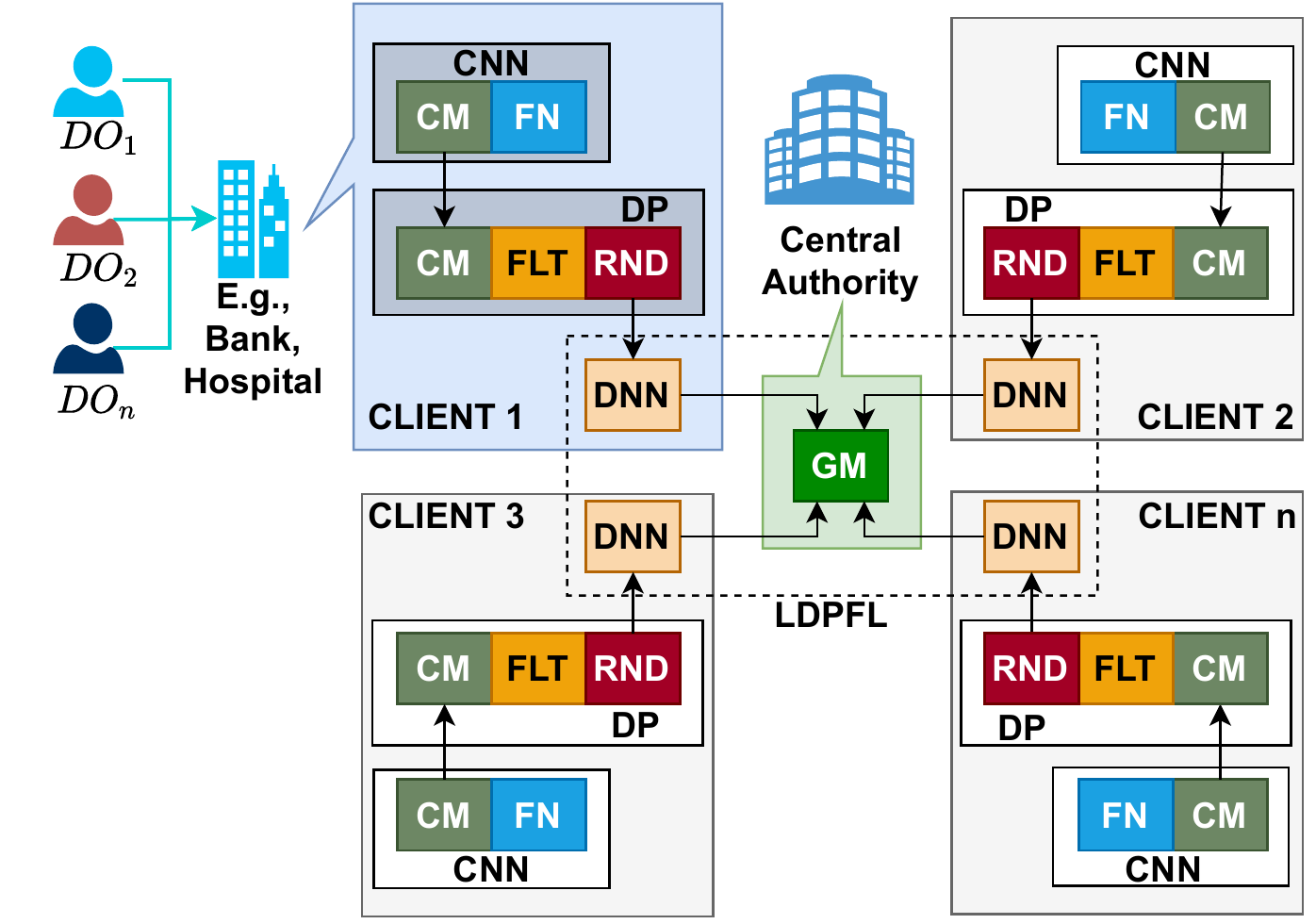}\label{ldpflarchitecture}}
			\vspace{10mm}
			\subfloat[The flow of main steps in  LDPFL.]{\includegraphics[width=0.35\textwidth, trim=0cm 0cm 0cm 0cm]{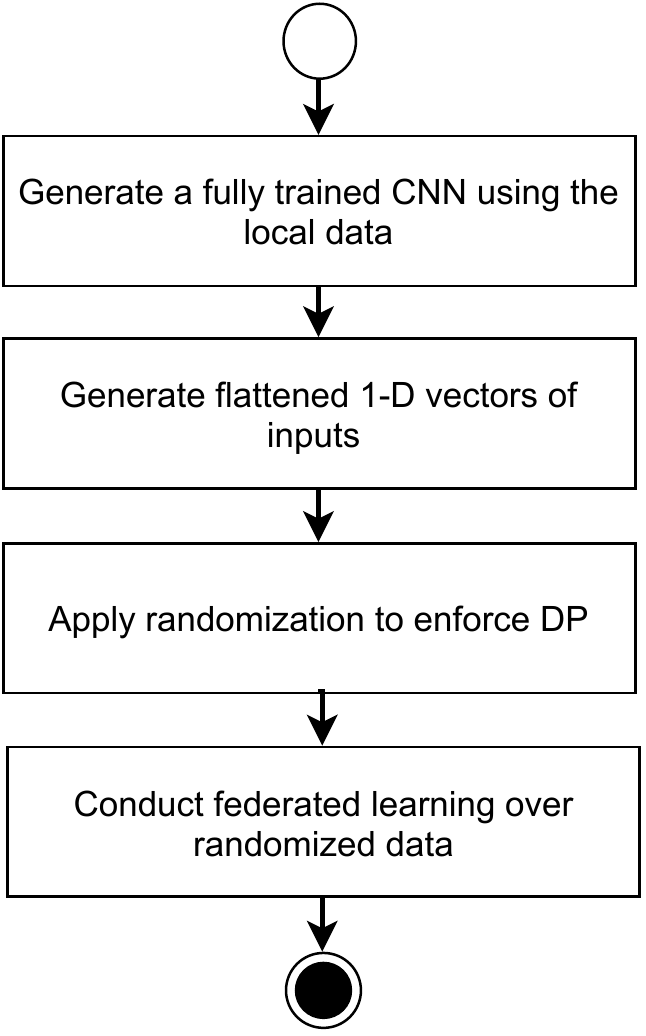}\label{ldpflflow}}
		}
		\caption{LDPFL architecture and its flow of main steps}
		\label{ldplarchiflow}
	\end{figure}

	\subsection{Generating a fully trained CNN using the local private data}
	\label{gencnn}
	The clients use the trained local CNN models to generate the flattened 1-D vectors of the inputs before input encoding and randomization. The randomized input vectors need to be of the same size for the FL setup. Besides, all clients must use the same CNN configurations for the input instances to be filtered through the same architecture of trained convolutional layers to allow a uniform feature extraction procedure.
	
	\subsection{Generating flattened 1-D vectors of inputs and randomizing them to enforce DP}
	\label{genflat}
	Once the CNN models converge on the locally available datasets, the clients use the Convolution module (refer to \textbf{CM} in Fig.~\ref{ldpflarchitecture}) of the converged CNN models to predict 1-D flattened outputs from the last convolutional layer of the \textbf{CM} for all inputs. Next, the predicted flattened outputs (1-D vectors: $1DV$) are encoded to binary vectors, which are then randomized to produce DP binary vectors. Utilizing a fully trained client CNN model for data flattening enables LDPFL to preserve the representative features of the input data and maintain the attribute distributions to generate high utility.
	
	\subsubsection{Binary encoding}
	
	Each element of a $1DV$ is converted to a binary value (binary representation) according to Equation \eqref{intbinary}. $m$ and $n$ are the numbers of binary digits of the whole number and the fraction, respectively. $x$ represents the original input value where $x\in \mathbb{R}$, and  $g(i)$ represents the $i^{th}$ bit of the binary string where the least significant bit is represented when $k=-m$. Positive numbers are represented with a sign bit of 0, and negative numbers are represented with a sign bit of 1.
	\begin{equation}
		\resizebox{.6\hsize}{!}{$
			g(i)=\left(\left\lfloor 2^{-k} x\right\rfloor \bmod 2\right)_{k=-m}^{n} \quad \text { where } i=k+m
			$}
		\label{intbinary}
	\end{equation}
	The binary conversion's sensitivity and precision (the range of floating values represented by the binary numbers) can be changed by increasing or decreasing the values chosen for $n$ and $m$. Separately randomizing each binary value adds up the privacy budget after each randomization step according to the composition property of DP (refer to Section~\ref{propdp}). Besides, dividing the privacy budget among the binary values introduces unreliable levels of bit randomization. Hence, we merge all binary values into one long binary vector ($L_{b}$) before the randomization to consume the privacy budget of randomization efficiently. Large values for $n$ and $m$ can result in undesirably long binary vectors for randomization. Hence, $n$ and $m$ must be chosen carefully by empirically evaluating and adjusting them to produce high model accuracy.
	
	\subsubsection{Randomization}
	The length of an encoded binary string is $l=(m+n+1)$; hence, the full length of a merged binary string ($L_{b}$) is $rl$ (take, $v$ to be any bit position of $L_{b}$ set to 1), where $r$ is the total number of outputs of the $1DV$. Consequently, the sensitivity of the encoded binary strings can be taken as $l\times r$, as two consecutive inputs can differ by at most $l\times r$ bits. Now the probability of randomization can be given by Equation \eqref{mergedrandprob} (according to Equation \ref{rapeq}).
	
	\begin{equation}
		p=\frac{e^{\varepsilon/rl}}{1+e^{\varepsilon/rl}}
		\label{mergedrandprob}
	\end{equation}
	
	With $p$ probability of randomization, the probability of randomization in reporting opposite of the true bits is $(1-p) = \frac{1}{1+e^{\varepsilon/rl}}$. This probability can lead to an undesirable level of randomization (with UE or OUE) due to the extremely high sensitivity $rl$. Hence, LDPFL employs an optimized randomization mechanism that further optimizes OUE to perturb 0s and 1s differently, reducing the probability of perturbing 0 to 1. In this way, LDPFL tries to maintain the utility at a high level under the high sensitivity of concatenated binary vectors, $L_{b}$s. The parameter, $\alpha$ (the privacy budget coefficient) is introduced as defined in Theorem~\ref{mouelarge} to improve the flexibility of randomization probability selection further while still guaranteeing $\varepsilon$-$LDP$. By increasing $\alpha$, we can increase the probability of transmitting the 0 bits in their original state. 
	
	\begin{theorem}
		\label{mouelarge}
		Let $v_1$, $v_2$ be any equally distributed bit positions of any two binary vectors $L_{b_1}$ and $L_{b_2}$, respectively, and $\textbf{B}$ be a $d$-bit binary string output. When $\operatorname{Pr}\left[\boldsymbol{B}\left[v_{1}\right]=1 | v_{1}\right] = \frac{1}{1+\alpha}$, 
		%
		%
		$\operatorname{Pr}\left[\boldsymbol{B}\left[v_{2}\right]=0 | v_{1}\right] = \frac{\alpha e^{\frac{\varepsilon}{{rl}/2}}}{1+\alpha e^{\frac{\varepsilon}{{rl}/2}}} $,
		%
		%
		the randomization provides $\varepsilon$-$LDP$.
	\end{theorem}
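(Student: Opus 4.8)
The plan is to follow the same template as the Unary/OUE argument in Equation~\ref{moueiproof}, generalised to accommodate the coefficient $\alpha$ and the enlarged sensitivity $rl$. First I would translate the two hypotheses into the underlying per-bit flip probabilities. Reading $\operatorname{Pr}[\boldsymbol{B}[v_1]=1\mid v_1]=\frac{1}{1+\alpha}$ as the probability $p$ of reporting a genuine $1$ as $1$, and $\operatorname{Pr}[\boldsymbol{B}[v_2]=0\mid v_1]=\frac{\alpha e^{\varepsilon/(rl/2)}}{1+\alpha e^{\varepsilon/(rl/2)}}$ as the probability $1-q$ of reporting a genuine $0$ as $0$, I obtain
\begin{equation*}
p=\frac{1}{1+\alpha},\quad 1-p=\frac{\alpha}{1+\alpha},\quad q=\frac{1}{1+\alpha e^{\varepsilon/(rl/2)}},\quad 1-q=\frac{\alpha e^{\varepsilon/(rl/2)}}{1+\alpha e^{\varepsilon/(rl/2)}}.
\end{equation*}
As a sanity check, putting $\alpha=1$ and $rl=2$ recovers exactly the OUE values $p=\tfrac12$ and $q=\tfrac{1}{1+e^{\varepsilon}}$, so the statement is a genuine generalisation of the mechanism already justified above.

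Next I would form the likelihood ratio exactly as in Equation~\ref{moueiproof}. Because every bit of $\boldsymbol{B}$ is flipped independently, the ratio $\operatorname{Pr}[\boldsymbol{B}\mid v_1]/\operatorname{Pr}[\boldsymbol{B}\mid v_2]$ factorises over bit positions; all positions on which $L_{b_1}$ and $L_{b_2}$ agree cancel, and the ratio is maximised by choosing the most favourable output at each differing position, namely $\boldsymbol{B}[v_1]=1$ and $\boldsymbol{B}[v_2]=0$. A single pair consisting of one genuine-$1$ position and one genuine-$0$ position then contributes
\begin{equation*}
\frac{p(1-q)}{q(1-p)}=\frac{p}{1-p}\cdot\frac{1-q}{q}=\frac{1}{\alpha}\cdot \alpha e^{\varepsilon/(rl/2)}=e^{\varepsilon/(rl/2)} .
\end{equation*}
The crucial feature is that $\alpha$ cancels, so $\alpha$ buys flexibility in tilting the $0\to1$ and $1\to0$ flip rates without touching the guarantee, which is precisely the point of introducing it.

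Finally I would discharge the sensitivity. Two adjacent merged vectors differ in at most $rl$ bit positions; when the encoded vectors are \emph{equally distributed}, i.e.\ of equal Hamming weight, the number of genuine-$1$ positions that are lost equals the number of genuine-$0$ positions that are gained, so the differing bits organise into at most $rl/2$ balanced pairs. Invoking the independence of the flips together with the composition property of Section~\ref{propdp}, the overall ratio is bounded by $\big(e^{\varepsilon/(rl/2)}\big)^{rl/2}=e^{\varepsilon}$, and the definition of $\varepsilon$-LDP in Equation~\eqref{ldpeq} then yields the claim.

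I expect the balanced-pairing step to be the main obstacle, and the place where the ``equally distributed'' hypothesis must be used with care. If one instead bounds each of the $rl$ differing bits by its own worst-case single-bit factor, the dominant factor is $p/q=\frac{1+\alpha e^{\varepsilon/(rl/2)}}{1+\alpha}$ rather than the paired quantity $e^{\varepsilon/(rl/2)}$, and raising it to the power $rl$ overshoots $e^{\varepsilon}$ (already for $\alpha=1$, $rl=2$, $\varepsilon=\ln 2$ one gets $(p/q)^{2}=2.25>2=e^{\varepsilon}$). The guarantee therefore genuinely relies on the $1\to0$ and $0\to1$ flips occurring in equal numbers, so that they telescope in pairs to $e^{\varepsilon/(rl/2)}$; I would make this equal-weight assumption explicit and confirm that the comparison between $L_{b_1}$ and $L_{b_2}$ can indeed be carried out pairwise before concluding.
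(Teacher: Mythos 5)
Your proof is correct and follows essentially the same route as the paper's: factorise the likelihood ratio over independent bit positions, reduce to the worst-case pair ratio $\frac{p(1-q)}{q(1-p)}=\frac{1}{\alpha}\cdot\alpha e^{\varepsilon/(rl/2)}=e^{\varepsilon/(rl/2)}$, and raise it to the power $rl/2$ dictated by the sensitivity $rl$. You actually go a step further than the paper, which asserts the bound $\leq(\cdot)^{rl/2}$ without comment: your observation that the ``equally distributed'' hypothesis is what allows the differing bits to be grouped into $rl/2$ balanced $(1\to 0,\,0\to 1)$ pairs --- and your counterexample showing that the naive per-bit bound $(p/q)^{rl}$ overshoots $e^{\varepsilon}$ --- makes explicit the one step the paper leaves implicit.
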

	
	\begin{proof}
		Let $\varepsilon$ be the privacy budget and $\alpha$ be the privacy budget coefficient. 
		
		\begin{equation}
			\resizebox{.5\hsize}{!}{$
				\begin{aligned} \frac{\operatorname{Pr}\left[\boldsymbol{B} | v_{1}\right]}{\operatorname{Pr}\left[\boldsymbol{B} | v_{2}\right]} &=\frac{\prod_{i \in[d]} \operatorname{Pr}\left[\boldsymbol{B}[i] | v_{1}\right]}{\prod_{i \in[d]} \operatorname{Pr}\left[\boldsymbol{B}[i] | v_{2}\right]} \\ & \leq \left( \frac{\operatorname{Pr}\left[\boldsymbol{B}\left[v_{1}\right]=1 | v_{1}\right] \operatorname{Pr}\left[\boldsymbol{B}\left[v_{2}\right]=0 | v_{1}\right]}{\operatorname{Pr}\left[\boldsymbol{B}\left[v_{1}\right]=1 | v_{2}\right] \operatorname{Pr}\left[\boldsymbol{B}\left[v_{2}\right]=0 | v_{2}\right]} \right)^{{rl}/2} \\ &=\left(\frac{\left(\frac{1}{1+\alpha}\right)}{\left(\frac{\alpha}{1+\alpha}\right)} \cdot \frac{\left(\frac{\alpha e^{\frac{\varepsilon}{{rl}/2}}}{1+\alpha e^{\frac{\varepsilon}{{rl}/2}}} \right)}{\left(\frac{1}{1+\alpha e^{\frac{\varepsilon}{{rl}/2}}} \right)}\right)^{{rl}/2}=e^{\varepsilon} \end{aligned}
				$}
			\label{repproof}
		\end{equation}
		
	\end{proof}
	
	Theorem \ref{mouelarge} provides the flexibility for selecting the randomization probabilities at large $\alpha$ values. However, it can introduce undesirable randomization levels on 1s when the bit string is too long (e.g., more than 10,000 bits). Hence, we extend Theorem \ref{mouelarge} further to impose additional flexibility over bit randomization. This is done by employing two randomization models over the bits of $L_{b}$, by randomizing one half of the bit string differently from the other half while still preserving $\varepsilon-LDP$ as defined in Theorem~\ref{uertheorem}. Consequently, Theorem \ref{uertheorem} applies less randomization on ${\frac{3}{4}}^{th}$ of a binary string while other ${\frac{1}{4}}^{th}$ of the binary string is heavily randomized. In this way, the randomization can maintain a high utility for extensively long binary strings as a significant part of the binary string is still preserved.
	
	\begin{theorem}
		\label{uertheorem}
		Let $Pr(\textbf{B}|v)$ be the probability of randomizing a bit for any input bit position $v$ and output $\textbf{B}$. For any equally distributed input bit positions, $v_1, v_2$ of any two binary vectors, $L_{b_1}$ and $L_{b_2}$, respectively,  with a sensitivity = $rl$, define the probability, $Pr(\textbf{B}|v)$ as in Equation~\ref{twomodprob}. Then the randomization provides $\varepsilon$-LDP.  
		
	\end{theorem}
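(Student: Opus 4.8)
The plan is to reuse the template of the proof of Theorem~\ref{mouelarge}: since the $rl$ bits of $L_b$ are perturbed independently, the likelihood ratio factors bitwise, so I start from $\frac{\Pr[\textbf{B}\mid v_1]}{\Pr[\textbf{B}\mid v_2]}=\prod_{i\in[d]}\frac{\Pr[\textbf{B}[i]\mid v_1]}{\Pr[\textbf{B}[i]\mid v_2]}$. The new ingredient in Theorem~\ref{uertheorem} is that Equation~\ref{twomodprob} prescribes a different perturbation model on each half of $L_b$, a gentle one on one half and an aggressive one on the other. I would therefore split the product into the two halves and treat each as a separate copy of the calculation in Equation~\ref{repproof}. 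Every factor at a position where $v_1$ and $v_2$ agree equals $1$, so only the disagreeing positions contribute; the ``equally distributed'' hypothesis is what guarantees that these disagreements are apportioned evenly between the two halves, fixing the exponent to which each half is raised.

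Next, within each half I would pick out the worst case exactly as in Equation~\ref{moueiproof}, namely the configuration in which the differing position is a reported $1$ under $v_1$ and a reported $0$ under $v_2$, reducing the per-position factor to $\frac{\Pr[\textbf{B}[v_1]=1\mid v_1]}{\Pr[\textbf{B}[v_1]=1\mid v_2]}\cdot\frac{\Pr[\textbf{B}[v_2]=0\mid v_1]}{\Pr[\textbf{B}[v_2]=0\mid v_2]}$. Substituting the half-specific probabilities from Equation~\ref{twomodprob} and cancelling the $\alpha$ and $(1+\alpha e^{\cdots})$ terms, just as in the telescoping of Equation~\ref{repproof}, should leave a bare exponential $e^{c\varepsilon_j/rl}$ for half $j$, where $\varepsilon_1$ and $\varepsilon_2$ are the two sub-budgets baked into the definition of the models.

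Finally I would reassemble the halves: raising each half's worst-case factor to its own exponent yields $e^{\varepsilon_1}$ and $e^{\varepsilon_2}$, so the total ratio is $e^{\varepsilon_1}e^{\varepsilon_2}=e^{\varepsilon_1+\varepsilon_2}$. Invoking the additive composition property recalled in Section~\ref{propdp} --- two mechanisms of budgets $\varepsilon_1$ and $\varepsilon_2$ applied to parts of the same record compose to $(\varepsilon_1+\varepsilon_2)$-DP --- this equals $e^{\varepsilon}$ once the definition sets $\varepsilon_1+\varepsilon_2=\varepsilon$. The step I expect to be the real obstacle is the exponent bookkeeping in the first paragraph: I must confirm that ``equally distributed'' places exactly the right number of differing bits in each half so that the two sub-calculations close to $\varepsilon_1$ and $\varepsilon_2$ rather than overshooting, and that the probabilities in Equation~\ref{twomodprob} are tuned so that each half's factor raised to its exponent produces precisely $e^{\varepsilon_j}$. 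Getting this accounting to sum to $\varepsilon$, rather than to some multiple of it, is the crux; once it is pinned down, the remaining algebra is the same routine cancellation as in Theorem~\ref{mouelarge}.
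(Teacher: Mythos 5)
Your setup (bitwise factorization of the likelihood ratio, splitting the product over $\mathcal{S}_1$ and $\mathcal{S}_2$, and the per-pair worst-case reduction with output $1$ at $v_1$ and $0$ at $v_2$) matches the paper's proof, but the step you yourself flag as the crux fails, and it fails in a way that invalidates your closing composition argument. Substituting the probabilities of Equation~\ref{twomodprob} does \emph{not} leave a bare exponential in each half: for $\mathcal{S}_1$ the per-pair factor is $\frac{\alpha/(1+\alpha)}{1/(1+\alpha)}\cdot\frac{\alpha e^{2\varepsilon/rl}/(1+\alpha e^{2\varepsilon/rl})}{1/(1+\alpha e^{2\varepsilon/rl})}=\alpha^{2}e^{2\varepsilon/rl}$, and for $\mathcal{S}_2$ it is $\alpha^{-3}\cdot\alpha e^{2\varepsilon/rl}=\alpha^{-2}e^{2\varepsilon/rl}$. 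Unlike Theorem~\ref{mouelarge} (Equation~\ref{repproof}), where pairing $\frac{1}{1+\alpha}$ against $\frac{\alpha}{1+\alpha}$ makes $\alpha$ cancel \emph{within} each pair, here the $\alpha$'s survive inside each half; they cancel only \emph{across} the two halves, because the equal-distribution assumption forces both halves to carry the same exponent $rl/4$, so that $\left(\alpha^{2}e^{2\varepsilon/rl}\right)^{rl/4}\left(\alpha^{-2}e^{2\varepsilon/rl}\right)^{rl/4}=\alpha^{rl/2}\,\alpha^{-rl/2}\,e^{\varepsilon}=e^{\varepsilon}$.

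Consequently, your plan to close each half to $e^{\varepsilon_j}$ and then invoke additive composition cannot be repaired. Taken on its own, the gentle half $\mathcal{S}_1$ is only $\left(\varepsilon/2+\frac{rl}{2}\ln\alpha\right)$-LDP, and the aggressive half $\mathcal{S}_2$, whose worst case must be taken over both orderings of $v_1,v_2$, is $\left(\frac{rl}{2}\ln\alpha-\varepsilon/2\right)$-LDP for $\alpha>1$; these standalone budgets sum to $rl\ln\alpha$, which for the paper's parameter ranges ($\alpha=10$, $rl$ on the order of $10^{4}$) is astronomically larger than $\varepsilon$. There simply are no valid sub-budgets $\varepsilon_1,\varepsilon_2\geq 0$ with $\varepsilon_1+\varepsilon_2=\varepsilon$ attached to the individual halves: the entire point of the construction is that the privacy overshoot of the lightly randomized half is offset by the heavily randomized half, and that offset is visible only when the two products are kept together inside one joint ratio, which is exactly what the paper does. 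The fix to your argument is small in mechanics but essential in concept: drop the composition step entirely, multiply the two raised worst-case factors, and conclude by the cancellation of $\alpha^{rl/2}$ against $\alpha^{-rl/2}$ --- a single algebraic identity, not a combination of two separate LDP guarantees.
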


	\begin{proof}
		Choose the randomization probabilities according to Equation~\ref{twomodprob}.
		
		\begin{equation}
			\resizebox{.5\hsize}{!}{$
				Pr(\textbf{B}|v)=\left\{\begin{array}{lc}
					\operatorname{Pr}\left[\boldsymbol{B}\left[v_{1}\right]=1 \mid v_{1}\right]=\frac{\alpha}{1+\alpha} & \text { if } i \in \mathcal{S}_1 \\
					\operatorname{Pr}\left[\boldsymbol{B}\left[v_{2}\right]=0 \mid v_{1}\right]=\frac{\alpha e^{\frac{\varepsilon}{{rl}/2}}}{1+\alpha e^{\frac{\varepsilon}{{rl}/2}}} & \text { if } i \in\mathcal{S}_1\ \\
					\operatorname{Pr}\left[\boldsymbol{B}\left[v_{1}\right]=1 \mid v_{1}\right]=\frac{1}{1+\alpha^{3}} & \text { if } i \in \mathcal{S}_2 \\
					\operatorname{Pr}\left[\boldsymbol{B}\left[v_{2}\right]=0 \mid v_{1}\right]=\frac{\alpha e^{\frac{\varepsilon}{{rl}/2}}}{1+\alpha e^{\frac{\varepsilon}{{rl}/2}}} & \text { if } i \in \mathcal{S}_2
				\end{array}\right.
				$}
			\label{twomodprob}
		\end{equation}
		
		where, $\mathcal{S}_1 = \{2 n \mid n \in \mathbb{N}\}$ and $\mathcal{S}_2 = \{2 n+1 \mid n \in \mathbb{Z}^+\}$.
		
		\begin{equation}
			\resizebox{.6\hsize}{!}{$
				\begin{aligned}
					\frac{\operatorname{Pr}\left[\boldsymbol{B} | v_{1}\right]}{\operatorname{Pr}\left[\boldsymbol{B} | v_{2}\right]} &= \frac{\prod_{i \in[d]} \operatorname{Pr}\left[\boldsymbol{B}[i] | v_{1}\right]}{\prod_{i \in[d]} \operatorname{Pr}\left[\boldsymbol{B}[i] | v_{2}\right]} \\
					&=\frac{\prod_{i \in\mathcal{S}_1} \operatorname{Pr}\left[\boldsymbol{B}[i] | v_{1}\right]}{\prod_{i \in\mathcal{S}_1} \operatorname{Pr}\left[\boldsymbol{B}[i] | v_{2}\right]}\times \frac{\prod_{i \in \mathcal{S}_2} \operatorname{Pr}\left[\boldsymbol{B}[i] | v_{1}\right]}{\prod_{i \in\mathcal{S}_2} \operatorname{Pr}\left[\boldsymbol{B}[i] | v_{2}\right]} \\ 
					& \leq \left( \frac{\operatorname{Pr}\left[\boldsymbol{B}\left[v_{1}\right]=1 | v_{1}\right] \operatorname{Pr}\left[\boldsymbol{B}\left[v_{2}\right]=0 | v_{1}\right]}{\operatorname{Pr}\left[\boldsymbol{B}\left[v_{1}\right]=1 | v_{2}\right] \operatorname{Pr}\left[\boldsymbol{B}\left[v_{2}\right]=0 | v_{2}\right]} \right)^{rl/4}\times \\
					&\left( \frac{\operatorname{Pr}\left[\boldsymbol{B}\left[v_{1}\right]=1 | v_{1}\right] \operatorname{Pr}\left[\boldsymbol{B}\left[v_{2}\right]=0 | v_{1}\right]}{\operatorname{Pr}\left[\boldsymbol{B}\left[v_{1}\right]=1 | v_{2}\right] \operatorname{Pr}\left[\boldsymbol{B}\left[v_{2}\right]=0 | v_{2}\right]} \right)^{rl/4} \\
					&=\left(\frac{\left(\frac{\alpha}{1+\alpha}\right)}{\left(\frac{1}{1+\alpha}\right)} \cdot \frac{\left(\frac{\alpha e^{\frac{\varepsilon}{rl/2}}}{1+\alpha e^{\frac{\varepsilon}{rl/2}}} \right)}{\left(\frac{1}{1+\alpha e^{\frac{\varepsilon}{rl/2}}} \right)}\right)^{rl/4} \left(\frac{\left(\frac{1}{1+\alpha^3}\right)}{\left(\frac{\alpha^3}{1+\alpha^3}\right)} \cdot \frac{\left(\frac{\alpha e^{\frac{\varepsilon}{rl/2}}}{1+\alpha e^{\frac{\varepsilon}{rl/2}}} \right)}{\left(\frac{1}{1+\alpha e^{\frac{\varepsilon}{rl/2}}} \right)}\right)^{rl/4}\\
					&=e^{\varepsilon} \end{aligned}
				$}
		\end{equation}
		
	\end{proof}
	
	\subsection{Conducting federated learning over randomized data}
	\label{randomization}
	After declaring the FL setup, the clients feed the randomized binary vectors as inputs to the FL setup of LDPFL. We assume that all examples are independent and that clients do not collude with one another. As shown in Fig.~\ref{ldpflarchitecture}, after the initialization of the local models, all clients train a local model (as represented by DNN in the figure) using the randomized inputs for a certain number of local epochs and transfer the trained model parameters to the server. Since LDPFL uses local differential privacy at each client and all examples are independent, the final privacy budget consumption is the maximum of all privacy budgets used by each client ($max\{\varepsilon_1, \varepsilon_2, \dots, \varepsilon_n\}-DP$). Algorithm~\ref{ranalgo} shows the composition of the steps (explained in Section \ref{approach}) of LDPFL in conducting differentially private federated learning that satisfies $\varepsilon-LDP$. 
	
	\begin{center}
		\scalebox{0.67}{
			\begin{minipage}{1.1\linewidth}
				\removelatexerror
				\begin{algorithm}[H]
					\caption{LDPFL Algorithm}
					\label{ranalgo}
					\begin{multicols}{2}
						\KwIn{
							\begin{tabular}{l c l} 
								$\{cx_1,\dots, cx_z\}               $ & $\gets $ & client datasets\\
								& & of $z$ clients\\
								$\varepsilon              $ & $\gets $ & privacy budget\\
								$m              $ & $\gets $ & number of bits  \\
								& & for the whole \\
								& &  number\\
								$n              $ & $\gets $ & number of bits \\
								& & for the fraction\\
								$\alpha$ & $\gets$ & privacy budget\\ 
								& & coefficient\\
								$el$ & $\gets$ & the total \\
								& & number of\\ 
								& & local epochs\\
								$E$ & $\gets$ & the total \\
								& & number of \\
								& & global rounds
							\end{tabular}
						}
						
						\KwOut{
							\begin{tabular}{ l c l } 
								$GM$ & $ \gets $ &  differentially private\\
								& &  global model\\
							\end{tabular}
						}
						\textbf{Part I: Randomized data  ~~~~~~generation at clients:}\\
						Declare $i^{th}$ client's model $CNN_i$ for  each ~~~~ client ($C_i$) (refer to  Section~\ref{gencnn})\;
						Train $CNN_i$ with $cx_i$ until the convergence\;  
						Split trained $CNN_i$ into $CM_i$ and $FN_i$ (refer to Section~\ref{genflat} and Fig.~\ref{ldplarchiflow})\;
						Declare, $l=(m+n+1)$\;
						Feed $cx_i$ to $CM_i$ and generate the sequence of 1-D feature arrays $\{d_1,\dots, d_j\}_i$ for $j$ data samples in $i^{th}$ client\;
						Convert  each field ($x$) of $d_q$ (where, $q=1,\dots,j$) to binary using, $g(i)=\,{\Big(\left\lfloor 2^{-k}\, \abs x \right\rfloor\text{ mod }2\Big)_{k=-m}^{n}}\ \text{where, } i=k+m$\;
						Generate arrays $\{FLT_1,\dots, FLT_j\}_i$ by merging the binary arrays  of each $d_q$ ~~~~ in $\{d_1,\dots, d_j\}_i$\;
						Calculate the randomization probability, ~~~~ $p$ according to Equation~\ref{twomodprob}\; 
						Randomize each $FLT_q$ of $\{FLT_1,\dots, FLT_j\}_i$ based on Theorem~\ref{uertheorem} to generate $\{RND_1, \dots, RND_j\}_i$\;
						Declare client models  ($DNN_i$) of each client (and the server - $GM$) for FL\;   
						\textbf{Part II: Federated learning:}\\
						Server randomly initializes model parameters ($M_0$)\;
						Server sends $M_0$ to the $z$ clients\;
						Clients initialize $DNN_i$ using $M_0$\;
						$e = 1$\;
						\While{e $\leq$ E}{
							\For{each client, $C_i$ in the current round}{
								
								Train $DNN_i$ using $\{RND_1, \dots, RND_j\}_i$ for $el$ epochs\;
								Send updated parameters $M_{u_i}$ to the server\;
							}
							Conduct federated averaging,
							$\mathcal{ML}_{fed}=\frac{1}{v} \sum_{i} \mathcal{M}_{u_i}$ (for $v$ clients contributed to the current round)\;
							Update client models ($DNN$s) with $\mathcal{ML}_{fed}$\;
							$e = e + 1$\;
						}
						GM = $\mathcal{ML}_{fed}$\;
						return GM\;
					\end{multicols}
				\end{algorithm}
			\end{minipage}%
		}
	\end{center}
	
	\section{Results and Discussion}
	\label{resdis}
	To test LDPFL, we use the MNIST~\cite{lecun1998gradient}, the CIFAR10~\cite{abadi2016deep}, the SVHN~\cite{sermanet2012convolutional}, and the FMNIST~\cite{xiao2017fashion} datasets. CIFAR10 is a much more complex dataset to be trained than MNIST. Hence, these two datasets introduce a balanced experimental setting for LDPFL performance testing. However, MNIST and CIFAR10 have a limited number of examples of 70,000 and 60,000 images, respectively. Hence, an extensive dataset is necessary to enable all clients to have a large enough dataset partition to test LDPFL's performance under a large number of clients. We use SVHN with 600,000 images to solve this problem. Besides, we use the FMNIST dataset for the performance comparison of LDPFL against previous approaches following the benchmarking conducted in~\cite{truex2020ldp}. We used a MacBook pro-2019 computer for single program experimentations. It has a processing unit of 2.4 GHz 8-Core Intel Core i9 and a memory of 32 GB 2667 MHz DDR4. We used one 112 Dual Xeon 14-core E5-2690 v4 Compute Node (with 256 GB RAM and 4 Tesla P100-SXM2-16GB GPUs)  of the CSIRO Bracewell HPC cluster for multi-round experimentation (repeating the experiments multiple rounds in parallel). We repeated all experiments ten times in the CSIRO Bracewell HPC cluster and reported the average performance to maintain the stability of the results.

	\subsection{LDPFL architectural configurations and datasets used during the experiments}
	\label{ldpflconfig}
	We used two LDPFL architectural configurations to study the performance under different dynamics of the datasets used, as the correct configuration leads to high model quality~\cite{schmidhuber2015deep}. Fig.~\ref{ldpflminst} shows the architecture used for the MNIST dataset. As shown in Fig.~\ref{ldpflcifar10}, we used a comparably complex configuration for CIFAR10, FMNIST, and SVHN as they are more complex datasets compared to MNIST. Figures~\ref{ldpflminst} and~\ref{ldpflcifar10} show the flow of modules in LDPFL, layer types used in the networks,  the input size of each layer, and the layer order from top to bottom. The input size of a particular layer also indicates the output size of the previous layer. The resolution of an image in FMNIST is 28x28x1 was different from the image resolution (32x32x3) in CIFAR10 and SVHN. Hence, we made necessary modifications (discussed in Sections~\ref{cnnconfig} and~\ref{fedlearn}) to the architecture in Fig.~\ref{ldpflcifar10} to accommodate the change in the input size when LDPFL was tested on FMNIST. As shown in Figures~\ref{ldpflminst} and~\ref{ldpflcifar10} we used the Python Keras API~\cite{chollet2015keras} to implement the CNN and DP modules. The federated learning module of LDPFL was implemented using the PyTorch API~\cite{paszke2019pytorch}. 
	\vspace{-0.4cm}
	\subsection{Conducting experiments on LDPFL}
	\subsubsection{Distributing data among clients for the experiments}
	We split the total number of records into groups with equal numbers of records according to the highest number of clients -- $N_h$ (the LDPFL was going to be tested on). Hence, a particular client holds a total of $\frac{T_r}{N_h}$ records, where $T_r$ is the total number of records. However, for the experiments on highly imbalanced data (the non-IID setting), we randomly distributed FMNIST data among ten clients with high sparseness, as shown in Fig.~\ref{FMNISTimbadist}.
	\newpage
	\begin{multicols}{2}
		\begin{figure}[H] 
			\centering
			\includegraphics[width=0.18\textwidth, trim=0cm 0cm 0cm 0cm]{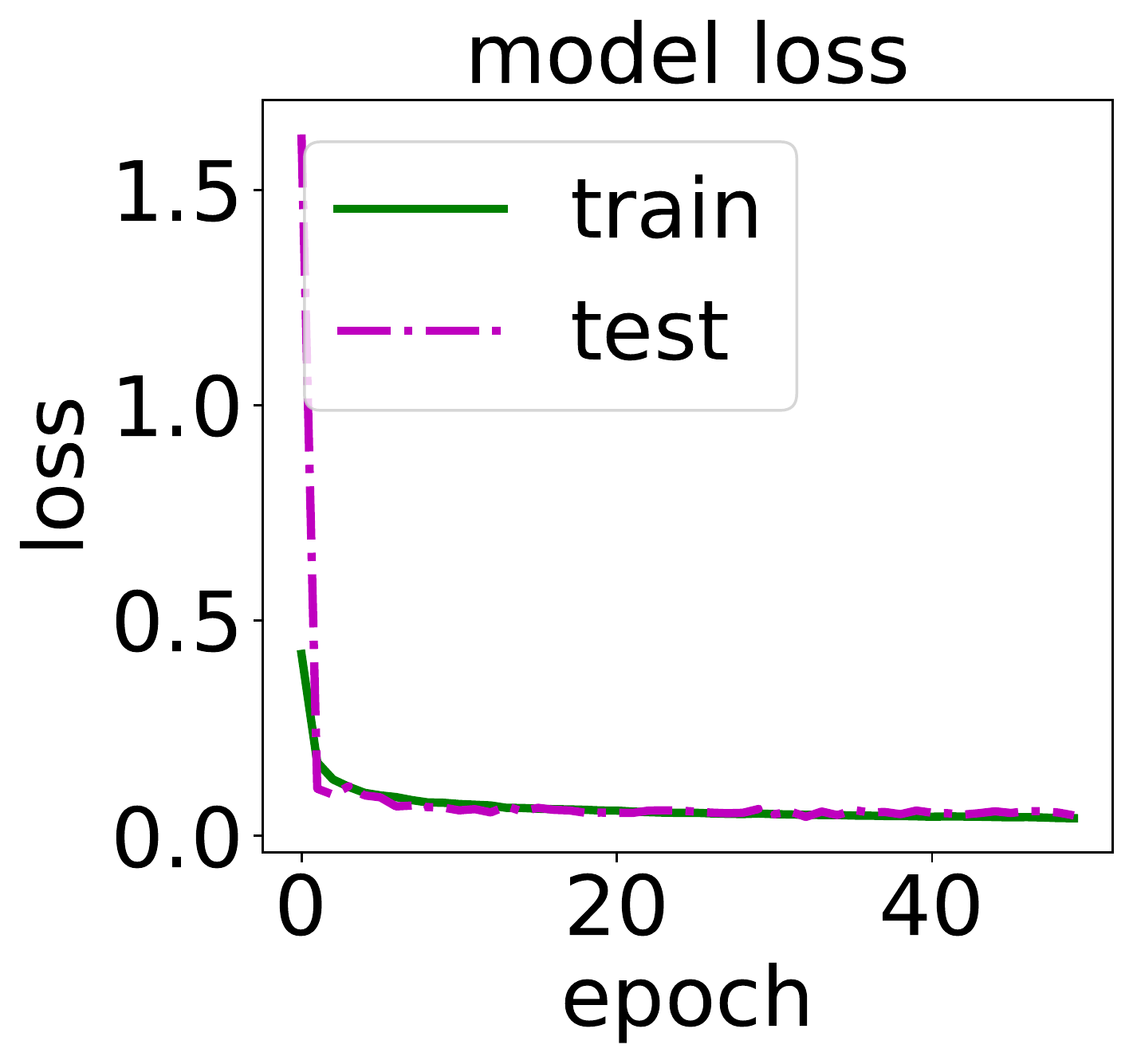}
			\includegraphics[width=0.19\textwidth, trim=0cm 0cm 0cm 0cm]{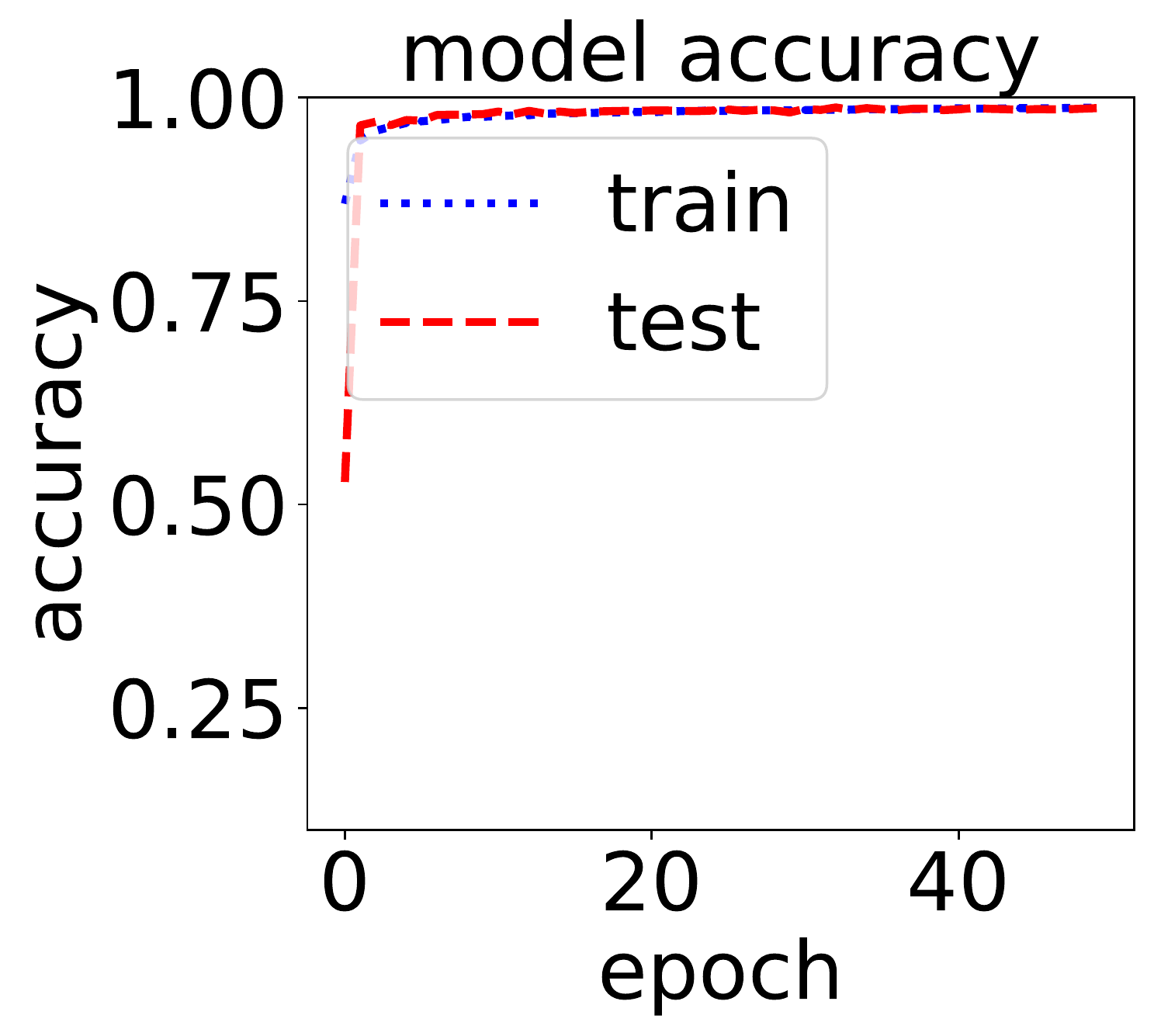}
			\caption{MNIST local model (refer to CNN in Figures~\ref{ldpflarchitecture} and~\ref{ldpflminst}) performance }
			\label{mnistcnnperformance}
		\end{figure}
		\begin{figure}[H] 
			\centering
			\includegraphics[width=0.18\textwidth, trim=0cm 0cm 0cm 0cm]{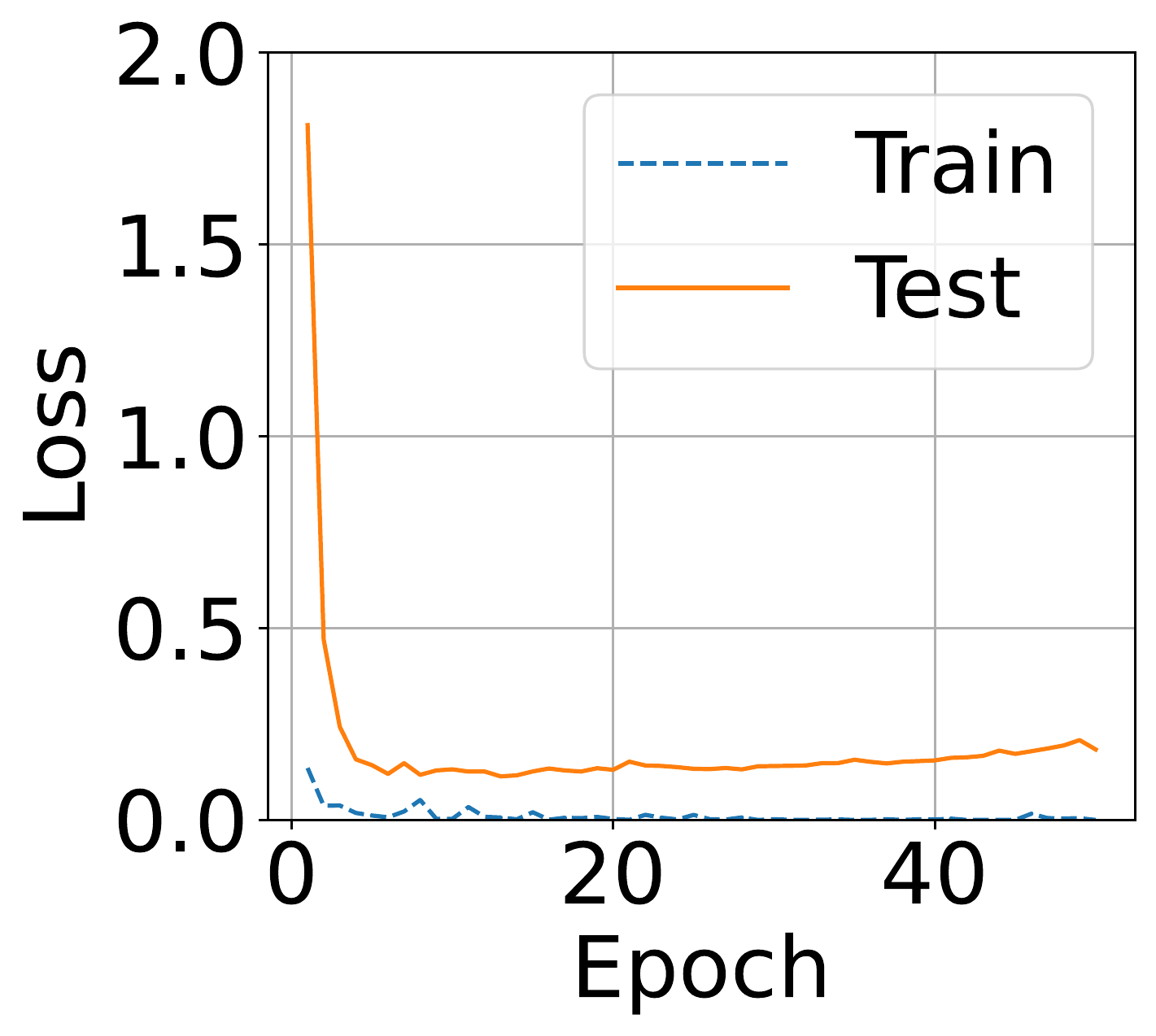}
			\includegraphics[width=0.19\textwidth, trim=0cm 0cm 0cm 0cm]{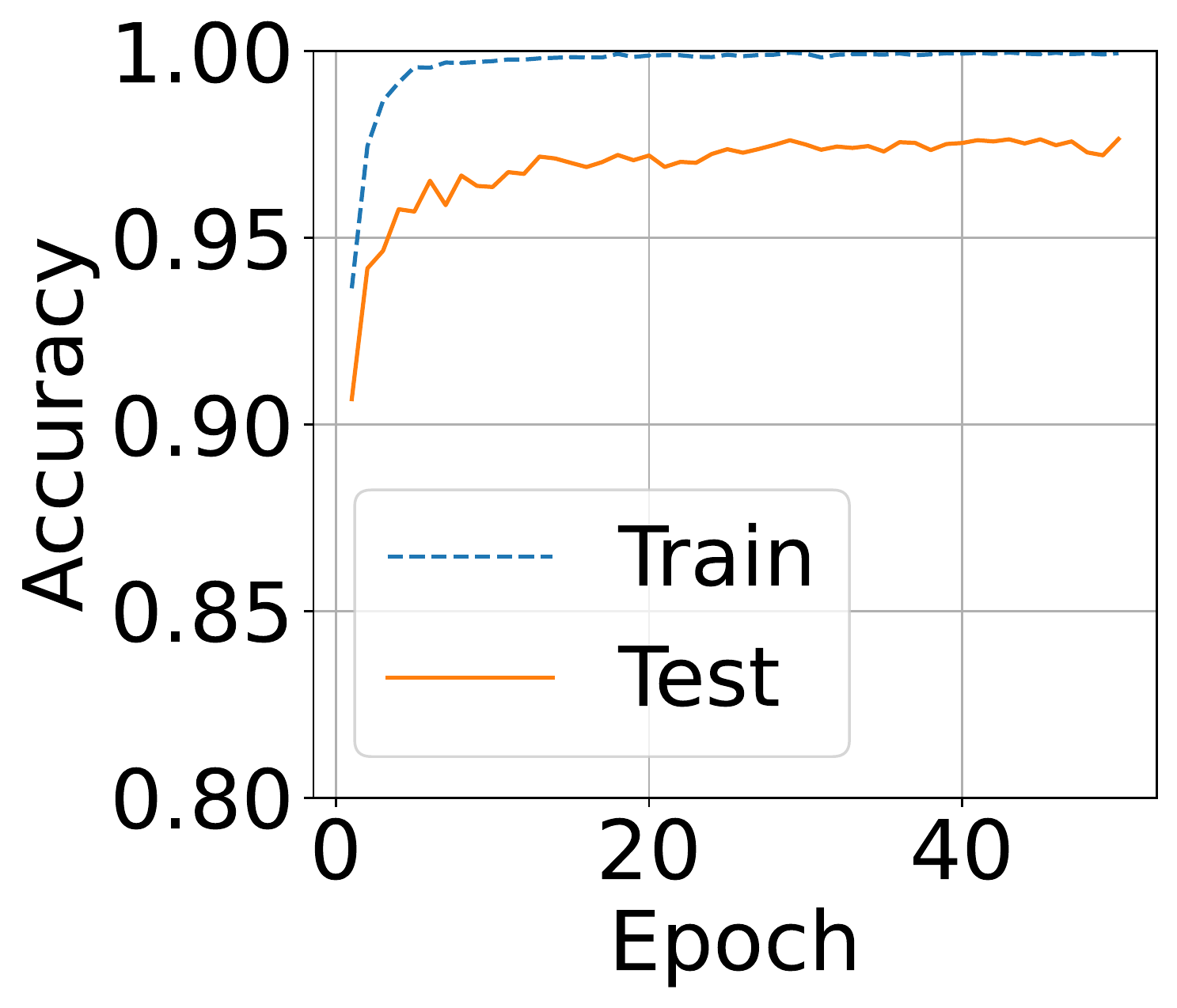}
			\caption{MNIST global model (refer to GM in Figures~\ref{ldpflarchitecture}  and~\ref{ldpflminst}) performance}
			\label{MNISTLDPFLperformance}
		\end{figure}
	\end{multicols}
	
	\vspace{-0.9cm}
	\subsubsection{Training client CNN models with image augmentation}
	We used 60000, 50000, 60000, and  451461 training samples and 10000, 10000, 10000, and 79670 testing samples, and $N_h$ was set to 2, 2, 10, and 100 under MNIST, CIFAR10, FMNIST, and SVHN, respectively. Hence, each client had 30000, 25000, 6000, and 4514 data samples for training, and 5000, 5000, 1000, and 796 testing samples under MNIST, CIFAR10, FMNIST, and SVHN, respectively. Each client used 90\% of local data for training and 10\% for testing. All clients used image augmentation to maintain a high local model performance and robustness under a low number of data samples. We used RMSprop(lr=0.001,decay=1e-6) optimizer for local CNN training with a batch size of 64. All CNNs were trained for 50 epochs. Figures~\ref{mnistcnnperformance} and~\ref{cifar10cnnperformance} show the two CNN client model performances under MNIST and CIFAR10, respectively. Fig.~\ref{fmnistmodelperformance} shows the CNN model performance of a randomly chosen one of the ten clients under FMNIST. From the 100 dataset splits of SVHN, we only considered a maximum of 50 clients as it provided enough evidence to understand the LDPFL performance patterns against the increasing number of clients. Fig.~\ref{svhnmodelperformance} shows the CNN model performance of a randomly chosen one of the 50 clients. The client CNN performance plots (\ref{mnistcnnperformance},~\ref{cifar10cnnperformance},~\ref{fmnistmodelperformance}, and ~\ref{svhnmodelperformance}) show that the configurations chosen for the client CNN models under each dataset generate good model performance. 
	
	\vspace{-0.5cm}
	\begin{minipage}[c]{0.38\linewidth}
		\begin{figure}[H] 
			\centering
			\includegraphics[width=0.44\textwidth, trim=0cm 0cm 0cm 0cm]{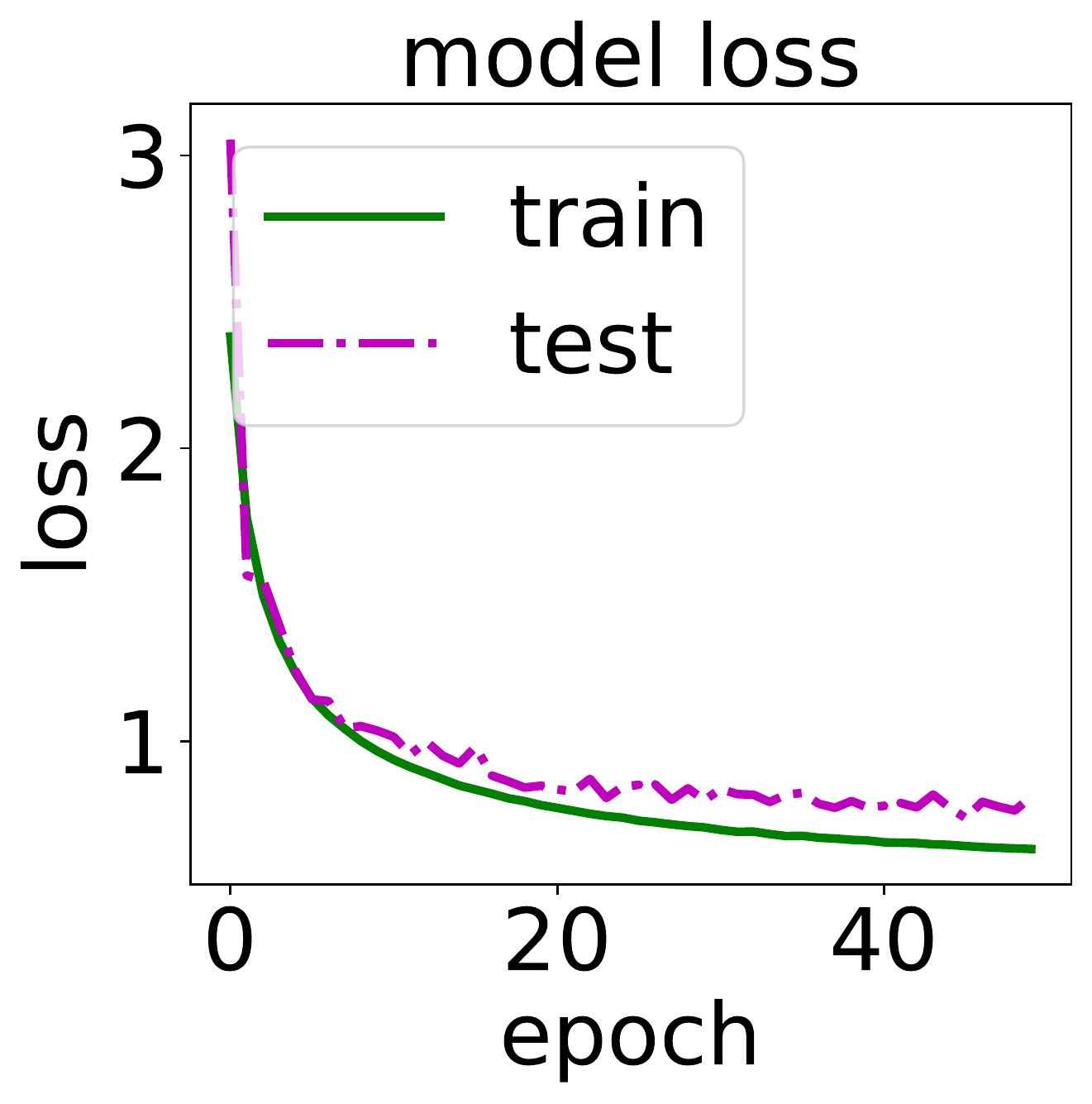}
			\hspace{0.2cm}
			\includegraphics[width=0.47\textwidth, trim=0.3cm 0cm 0cm 0cm]{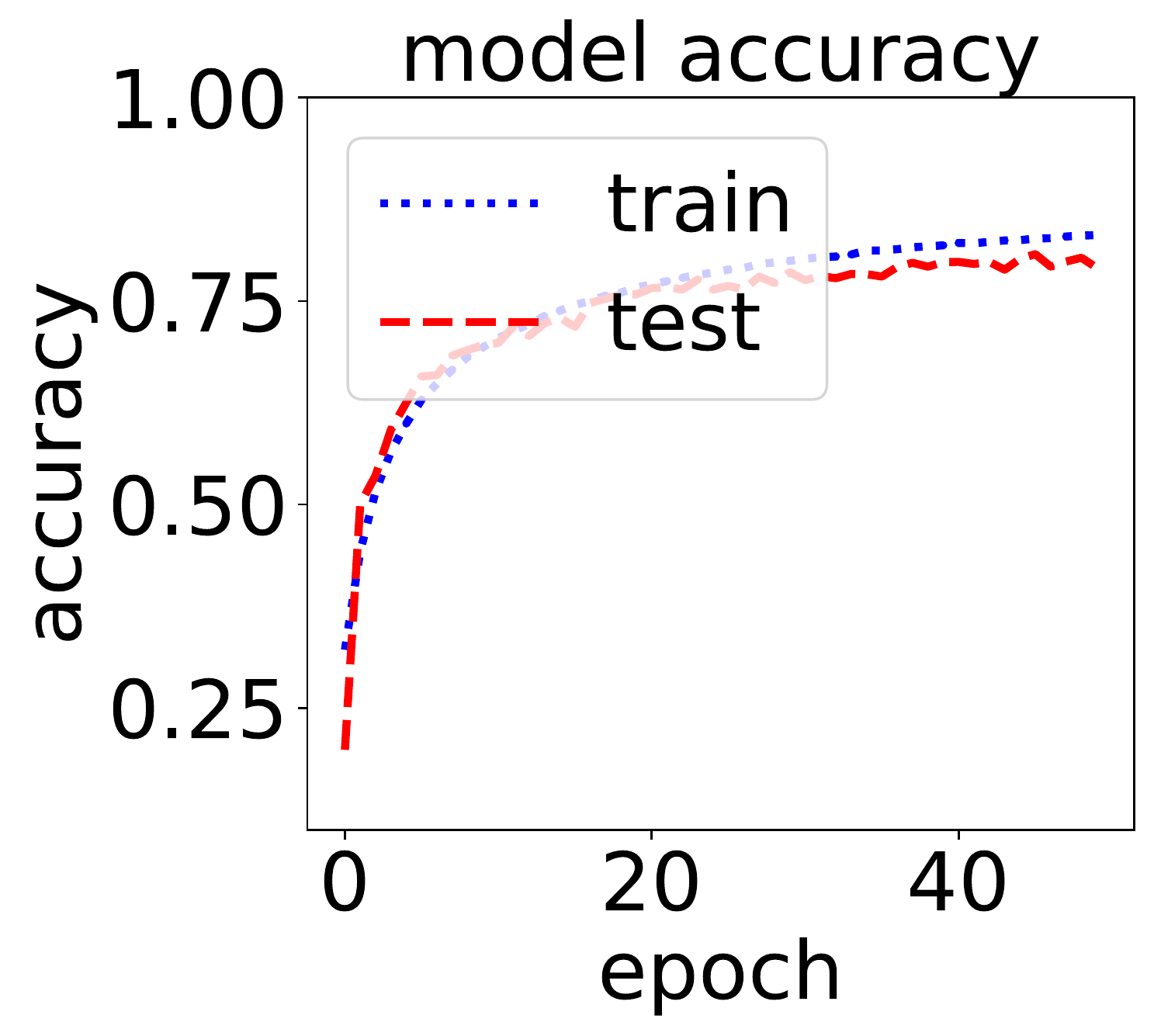}
			\caption{CIFAR10 local model (refer to CNN in Figures~\ref{ldpflarchitecture} and~\ref{ldpflcifar10}) performance}
			\label{cifar10cnnperformance}
		\end{figure}
	\end{minipage}
	\hspace{0.2cm}
	\begin{minipage}[c]{0.54\linewidth}
		\begin{figure}[H] 
			\centering
			\scalebox{1.02}{
				\includegraphics[width=0.28\textwidth, trim=0cm 0cm 0cm 0cm]{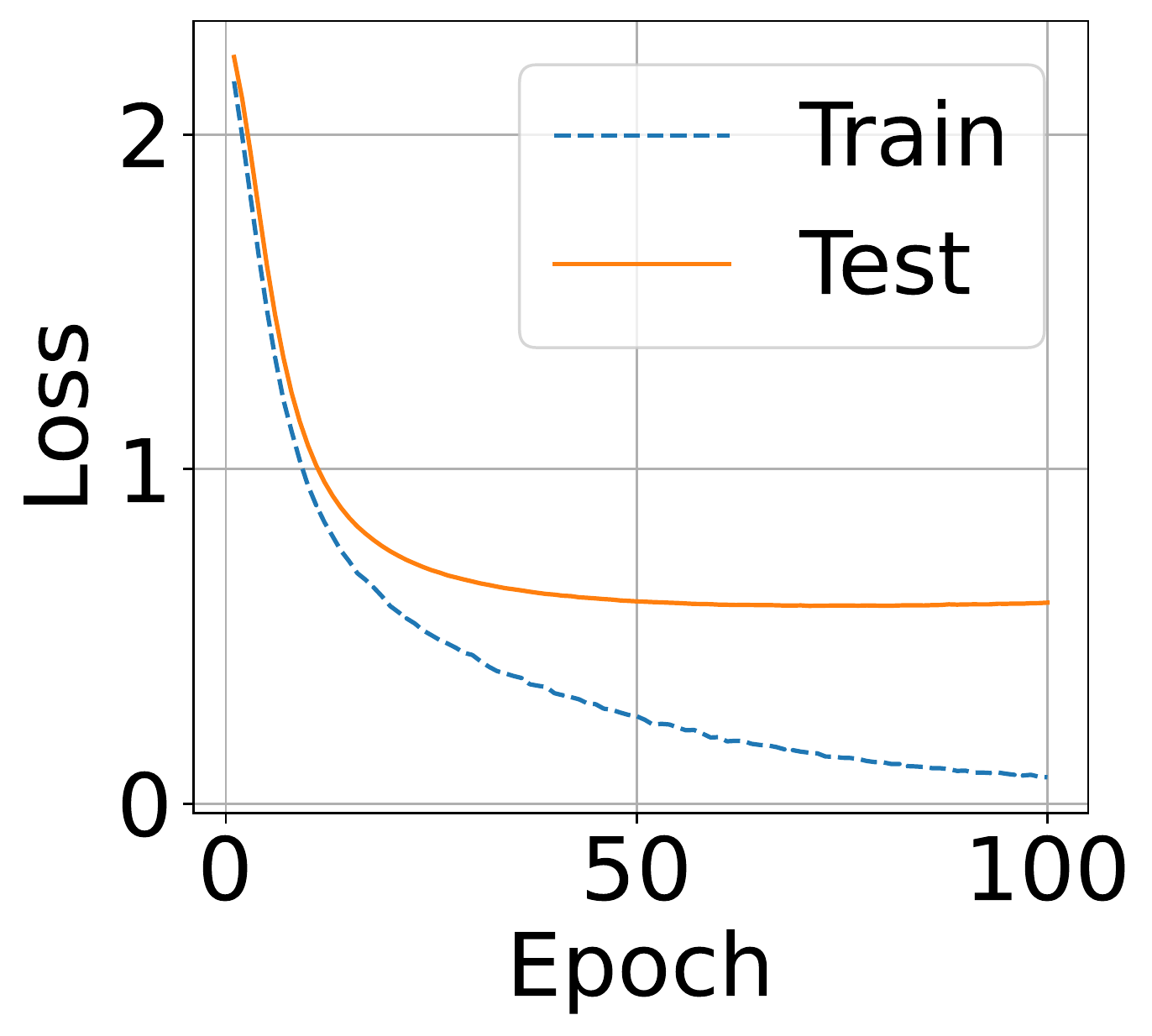}
				\includegraphics[width=0.30\textwidth, trim=0cm 0cm 0cm 0cm]{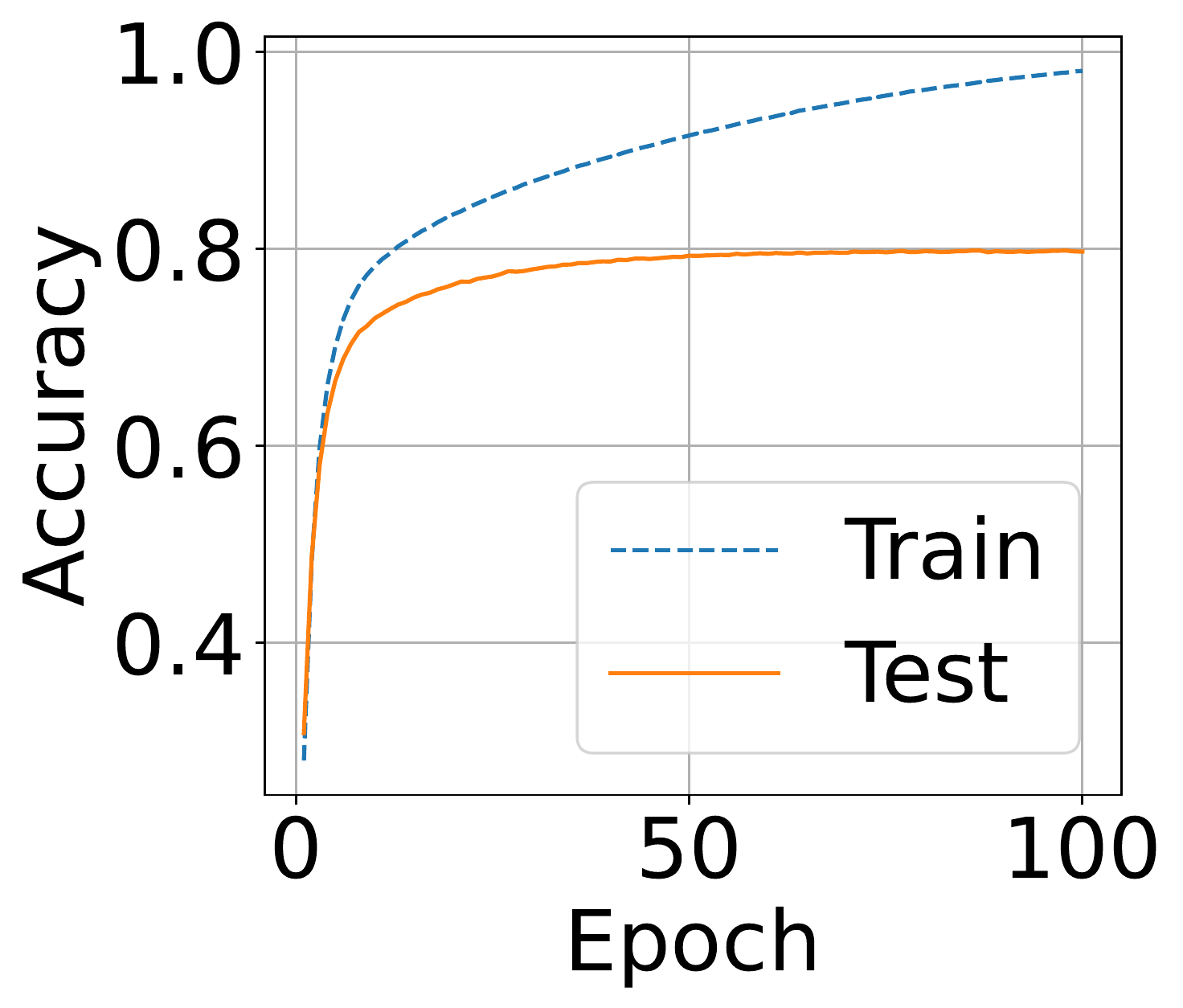}
				\includegraphics[width=0.32\textwidth, trim=0cm 0cm 0cm 0cm]{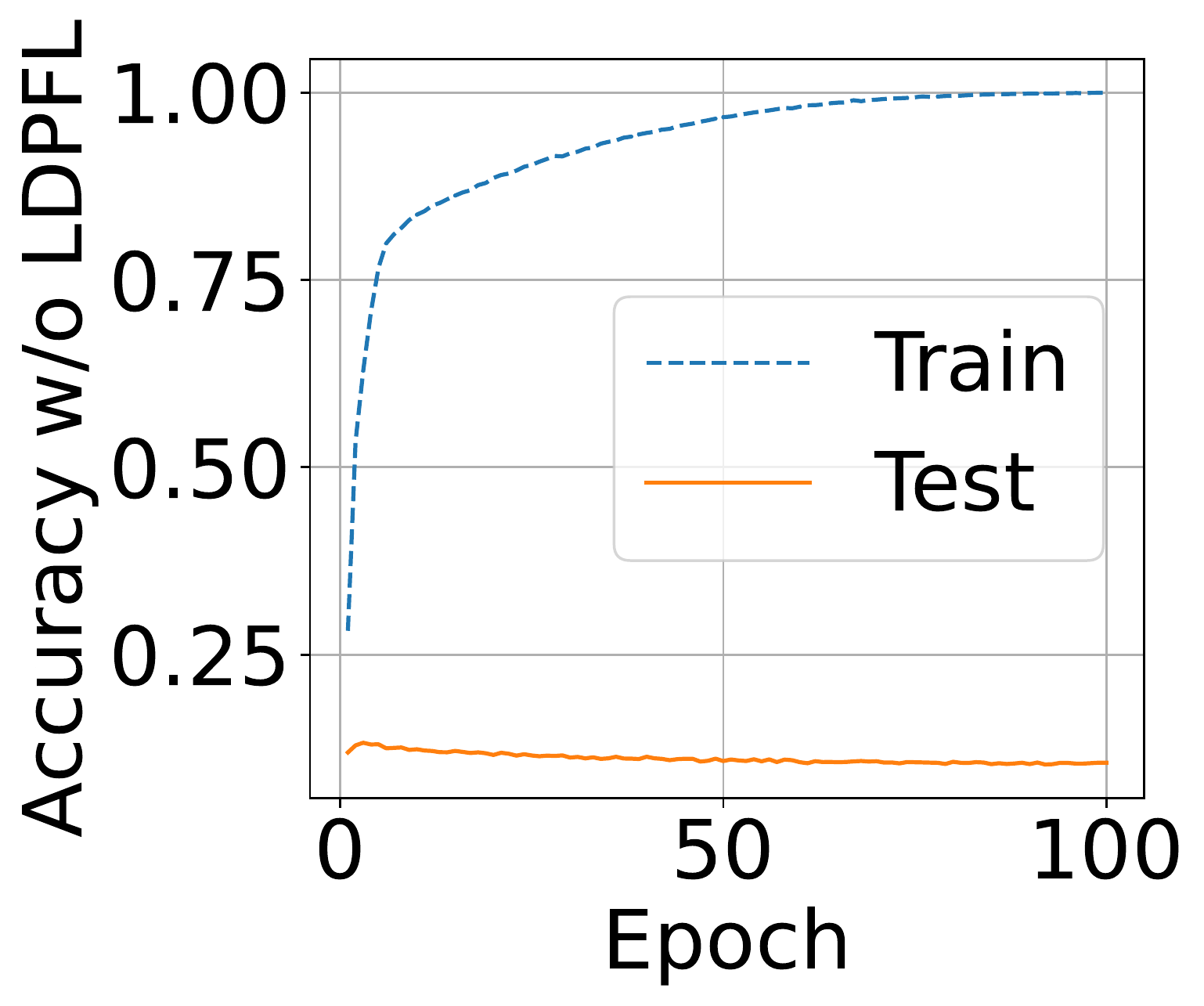}
			}
			\caption{CIFAR10 global model (refer to GM in Figures~\ref{ldpflarchitecture} and~\ref{ldpflcifar10}) performance. The third sub-figure shows the DNN accuracy without federated learning}\label{CIFAR10LDPFLperformance}
		\end{figure}
	\end{minipage}
	\vspace{-0.8cm}
	\begin{figure}[H] 
		\centering
		\subfloat[FMNIST local model performance of a client (randomly selected - refer to CNN in Figures~\ref{ldpflarchitecture} and~\ref{ldpflcifar10})]{\includegraphics[width=0.17\textwidth, trim=0cm 0cm 0cm 0cm]{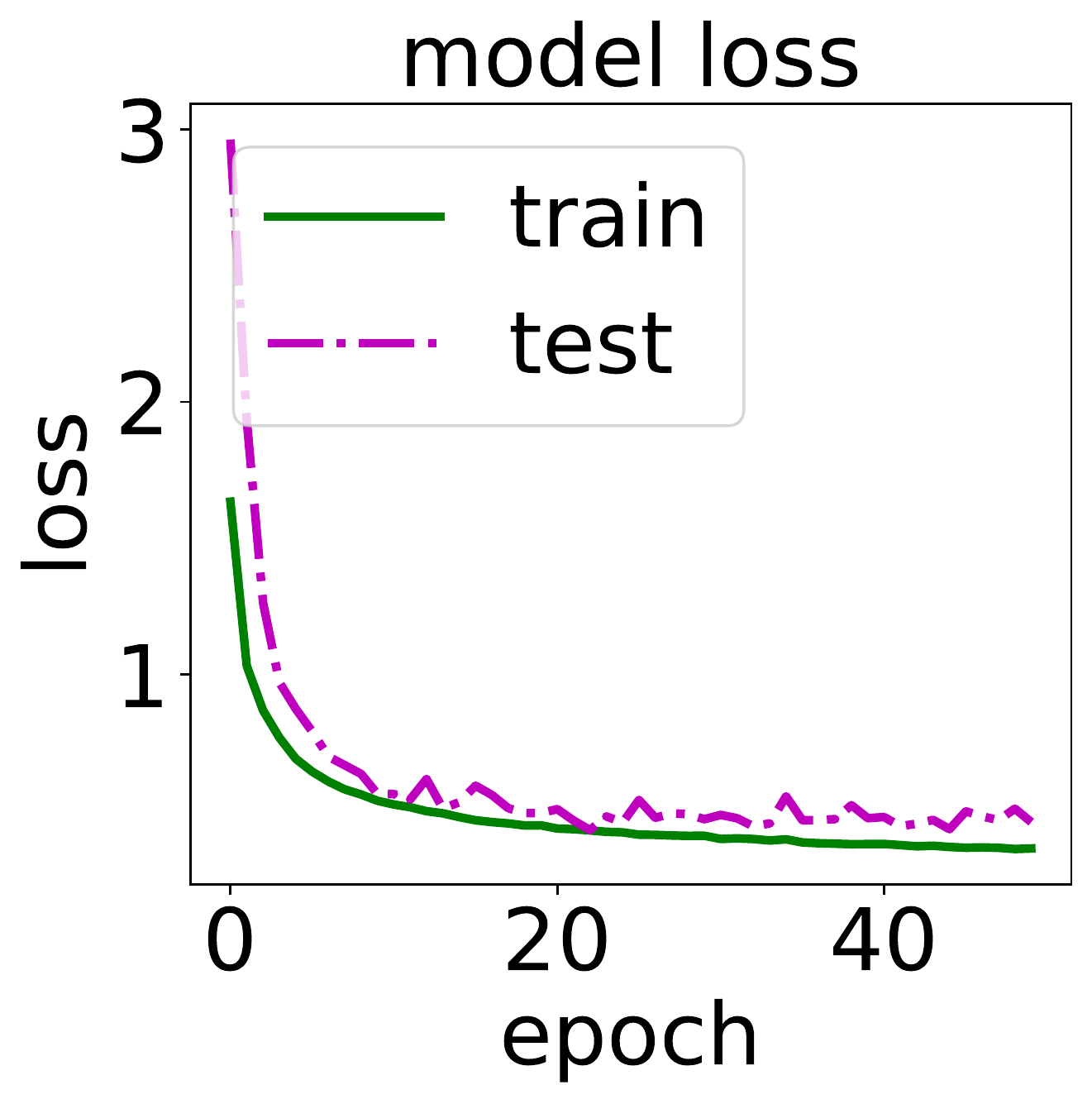}
			\includegraphics[width=0.19\textwidth, trim=0.3cm 0cm 0cm 0cm]{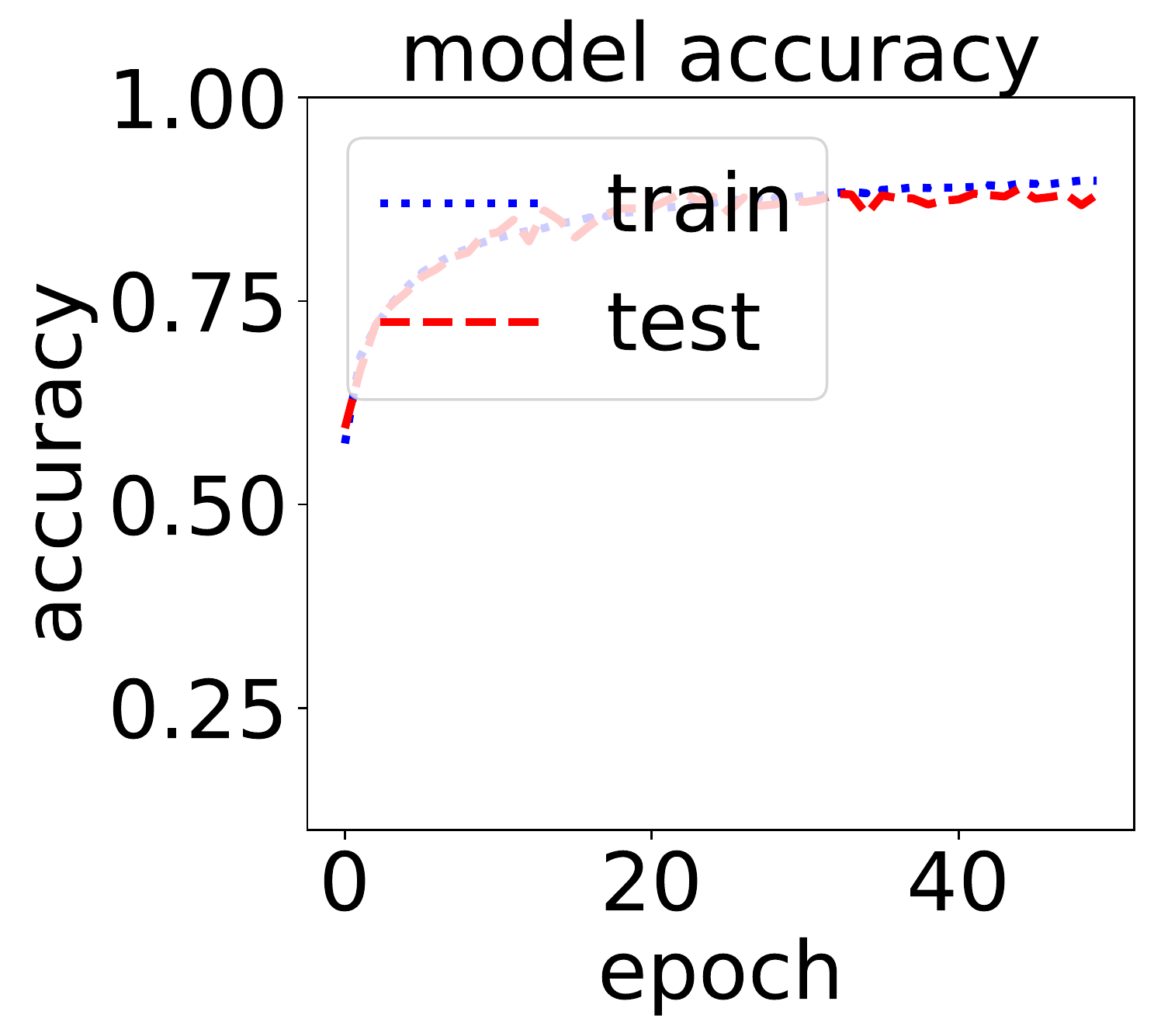}\
			\label{fmnistmodelperformance}
		}
		\hspace{1.5cm}
		\subfloat[FMNIST global model (refer to GM in Figures~\ref{ldpflarchitecture} and~\ref{ldpflcifar10}) performance]{\includegraphics[width=0.18\textwidth, trim=0cm 0cm 0cm 0cm]{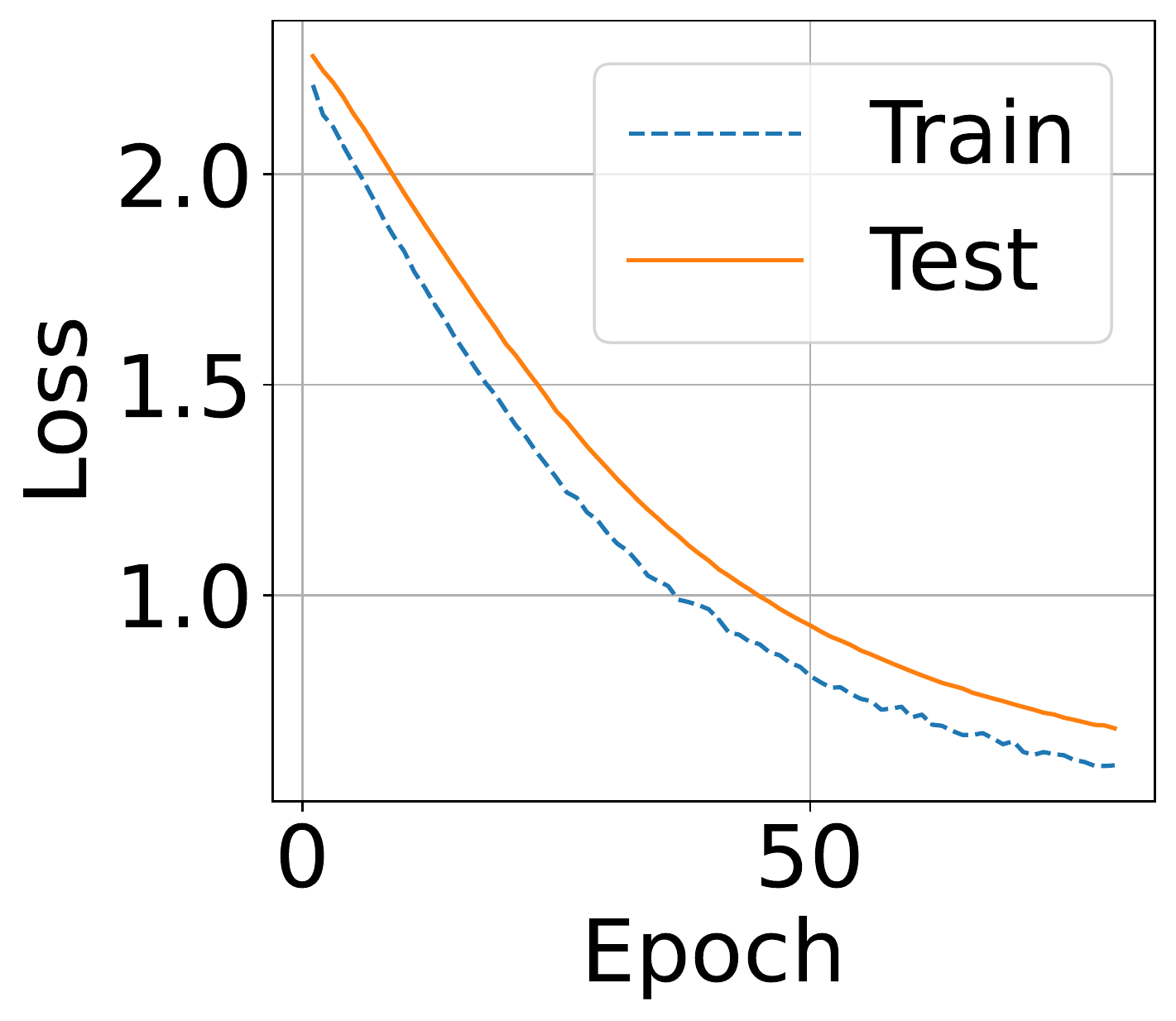}
			\includegraphics[width=0.18\textwidth, trim=0.3cm 0cm 0cm 0cm]{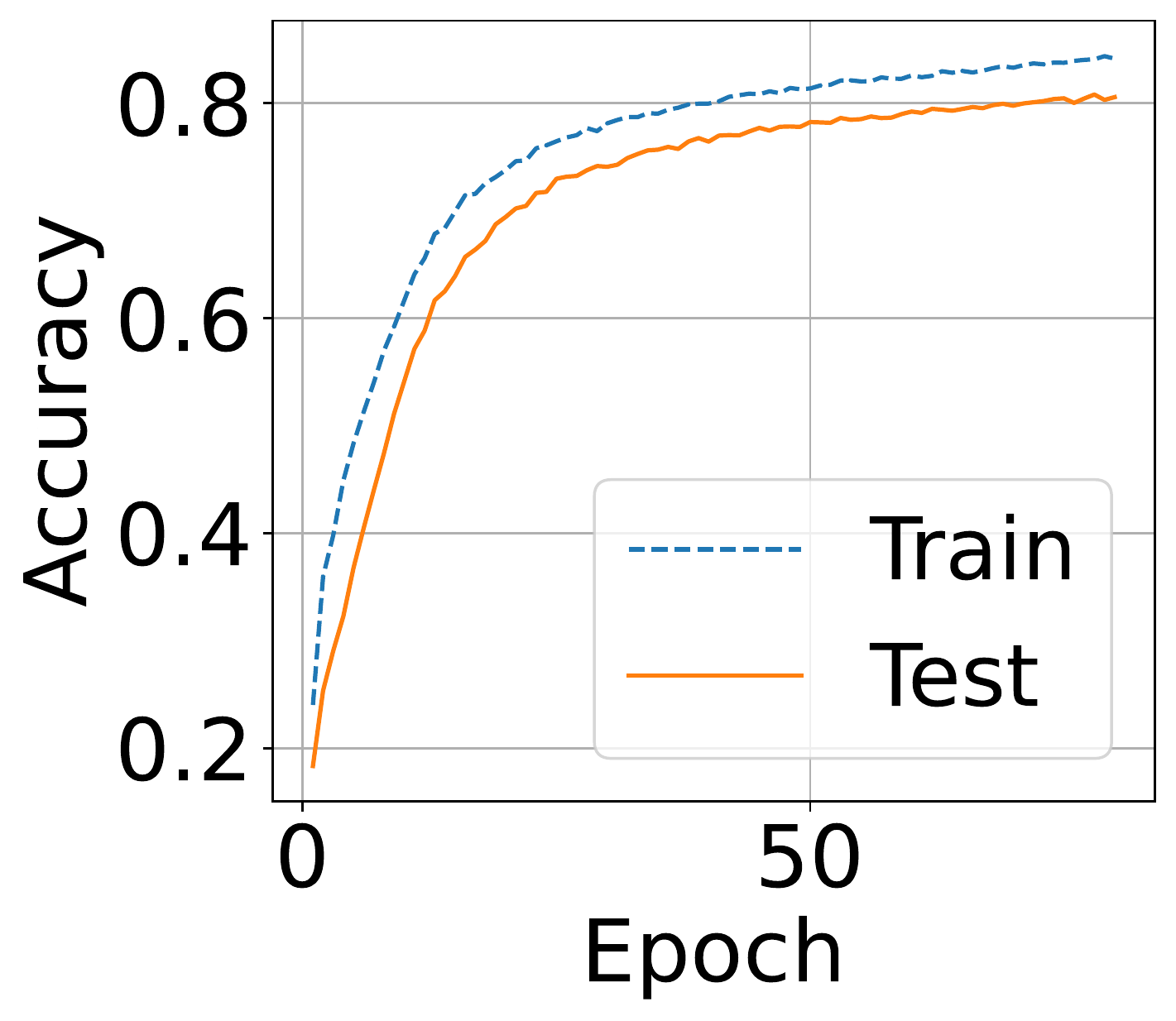}
			\label{FMNISTLDPFLperformance}
		}
		
		\caption{LDPFL performance under the FMNIST dataset}
	\end{figure}
	\vspace{-1.5cm}
	\begin{figure}[H] 
		\centering
		\subfloat[Data distribution among the clients (second and third plots represent the data distributions of two randomly selected clients)]{\includegraphics[width=0.60\textwidth, trim=0cm 0cm 0cm 0cm]{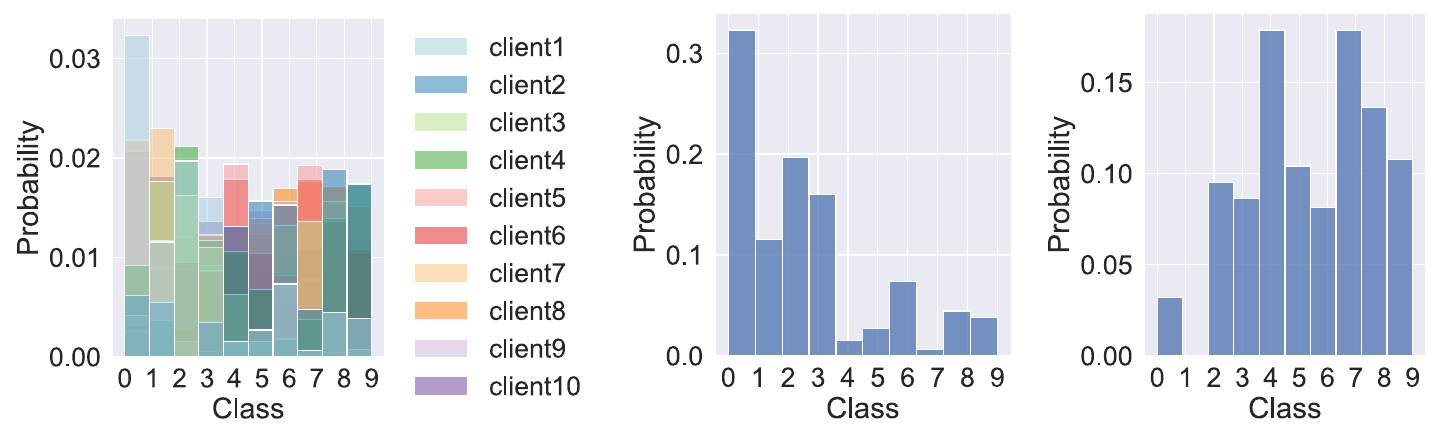}\label{FMNISTimbadist}}
		\hfill
		\subfloat[LDPFL Vs. vanilla FL performance]{\includegraphics[width=0.24\textwidth, trim=0.3cm 0cm 0cm 0cm]{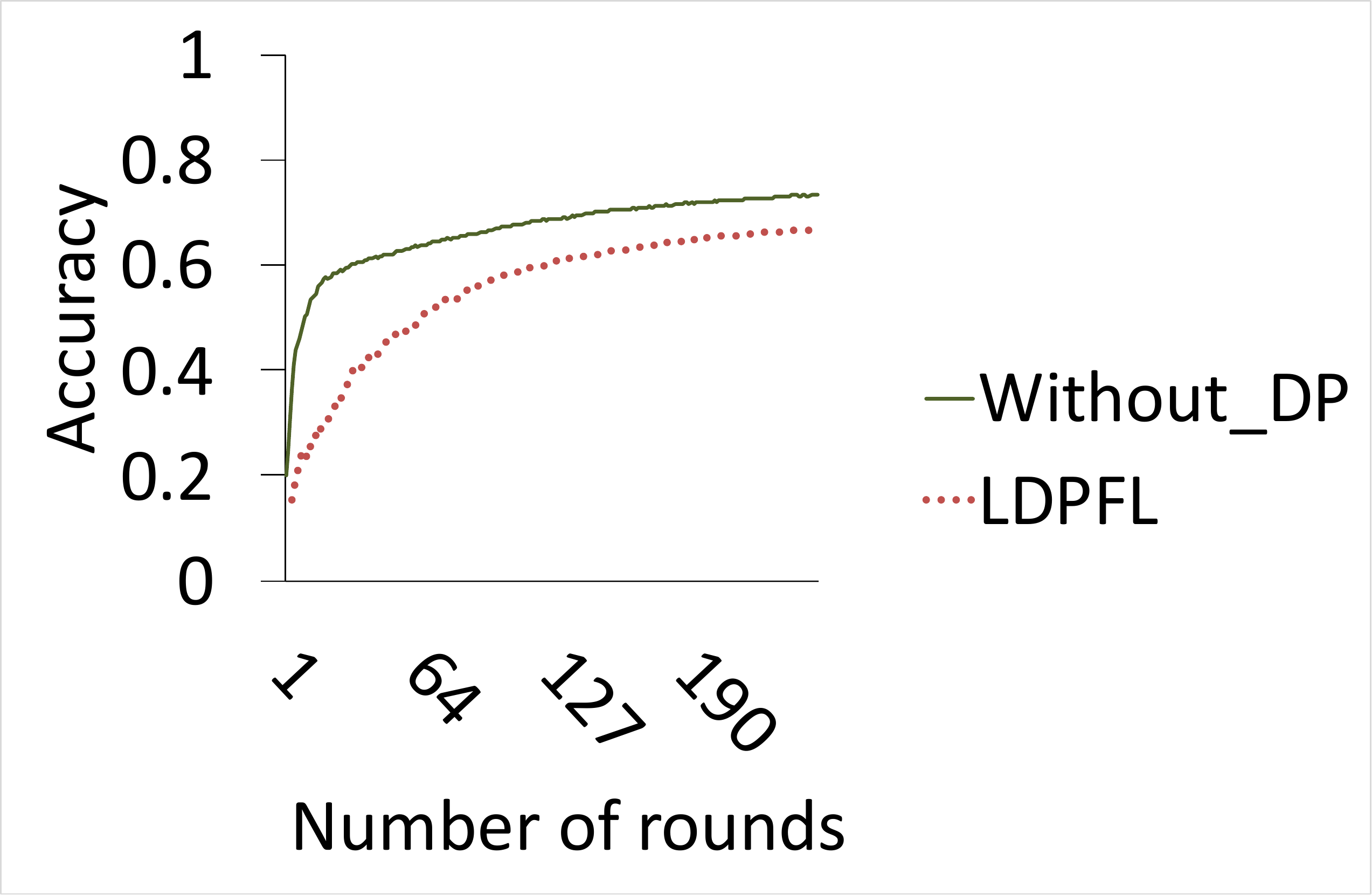}\label{twoclassimbdist}}
		\caption{Performance of LDPFL against vanilla FL under highly imbalanced (FMNIST) data (the non-IID setting)}
		\label{imbalancedata}
	\end{figure}
	\vspace{-1.2cm}
	\subsubsection{Randomizing data for differentially private FL}
	\label{difprivdata}
	
	The flattened outputs maintain a high correlation to the corresponding original inputs as the CM (refer to Fig.~\ref{ldplarchiflow}) was already trained on the inputs. Hence, the randomized data can appropriately preserve the input characteristics leading to good classification accuracy. During the data randomization, we maintained $m$, $n$, $\alpha$, and $\varepsilon$ at 4, 5, 10, and 0.5, respectively unless mentioned otherwise. With the sign bit, each digit in the flattened output is encoded to 10-bit $l = (m+n+1)$ binary representation. Since the sensitivity of an encoded binary string is equal to its length ($rl$), the binary strings generated under MNIST, CIFAR10, SVHN, and FMNIST  have sensitivities of $10240$, $20480$, $20480$, and $11520$, respectively. Hence, we maintain epsilon at 0.5, as increasing $\varepsilon$ within the acceptable limits (e.g., $0<\varepsilon\leq 10$) has a negligible impact on $\frac{\varepsilon}{{rl}/2}$. By maintaining $\alpha$ at a constant value of 10 (unless specified otherwise), we ensure that the binary string randomization dynamics are kept uniform during all experiments to observe unbiased results. However, to investigate the effect of $\alpha$ on the global model convergence, we changed $\alpha$ from 4 to 10 (refer to  Fig.~\ref{mnist_global_accuracy_comparison}). 
	
	
	\begin{minipage}[c]{0.29\linewidth}
		\begin{figure}[H] 
			\centering
			\includegraphics[width=0.46\textwidth, trim=0cm 0cm 0cm 0cm]{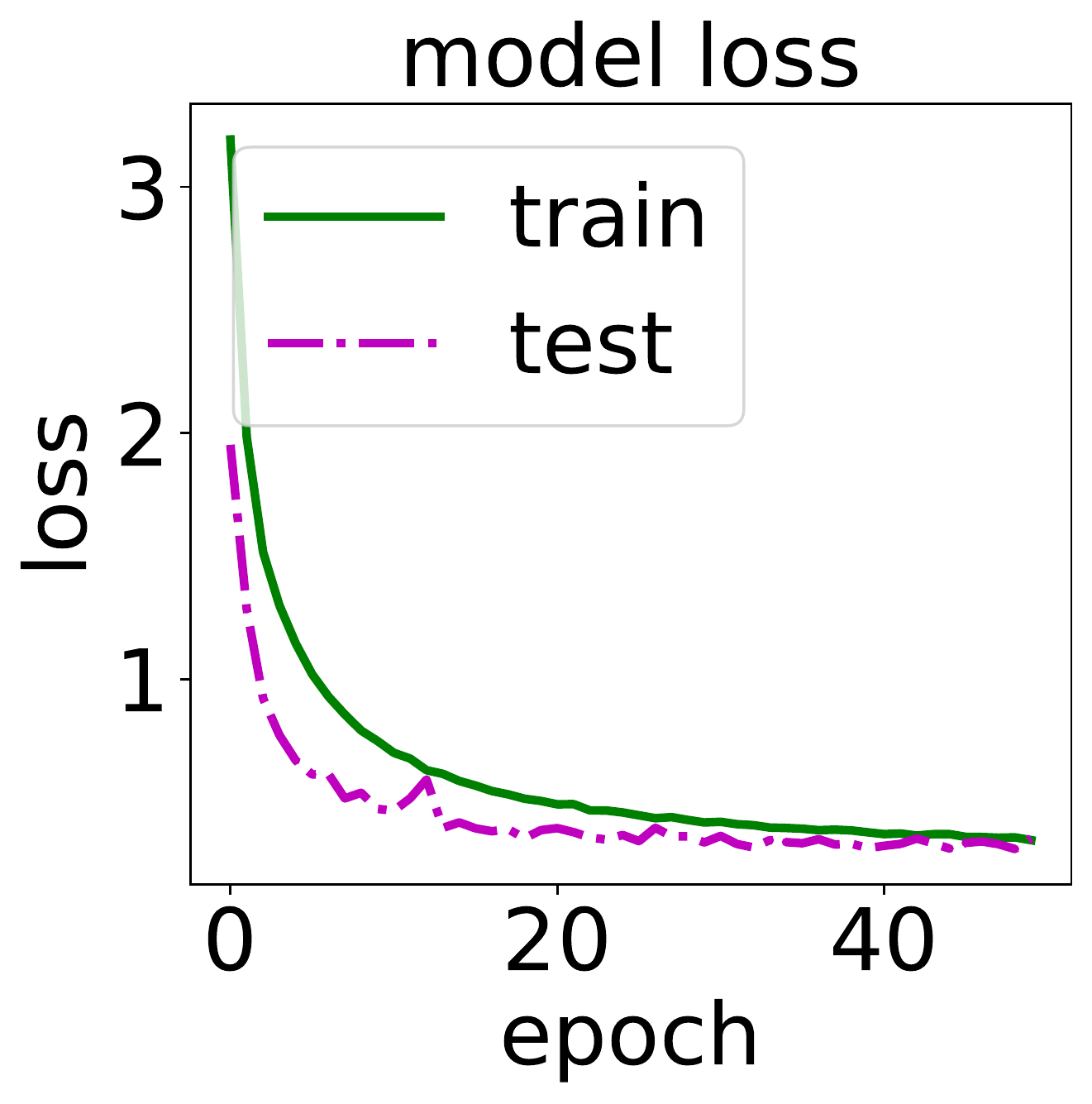}
			\hfill
			\includegraphics[width=0.51\textwidth, trim=0.3cm 0cm 0cm 0cm]{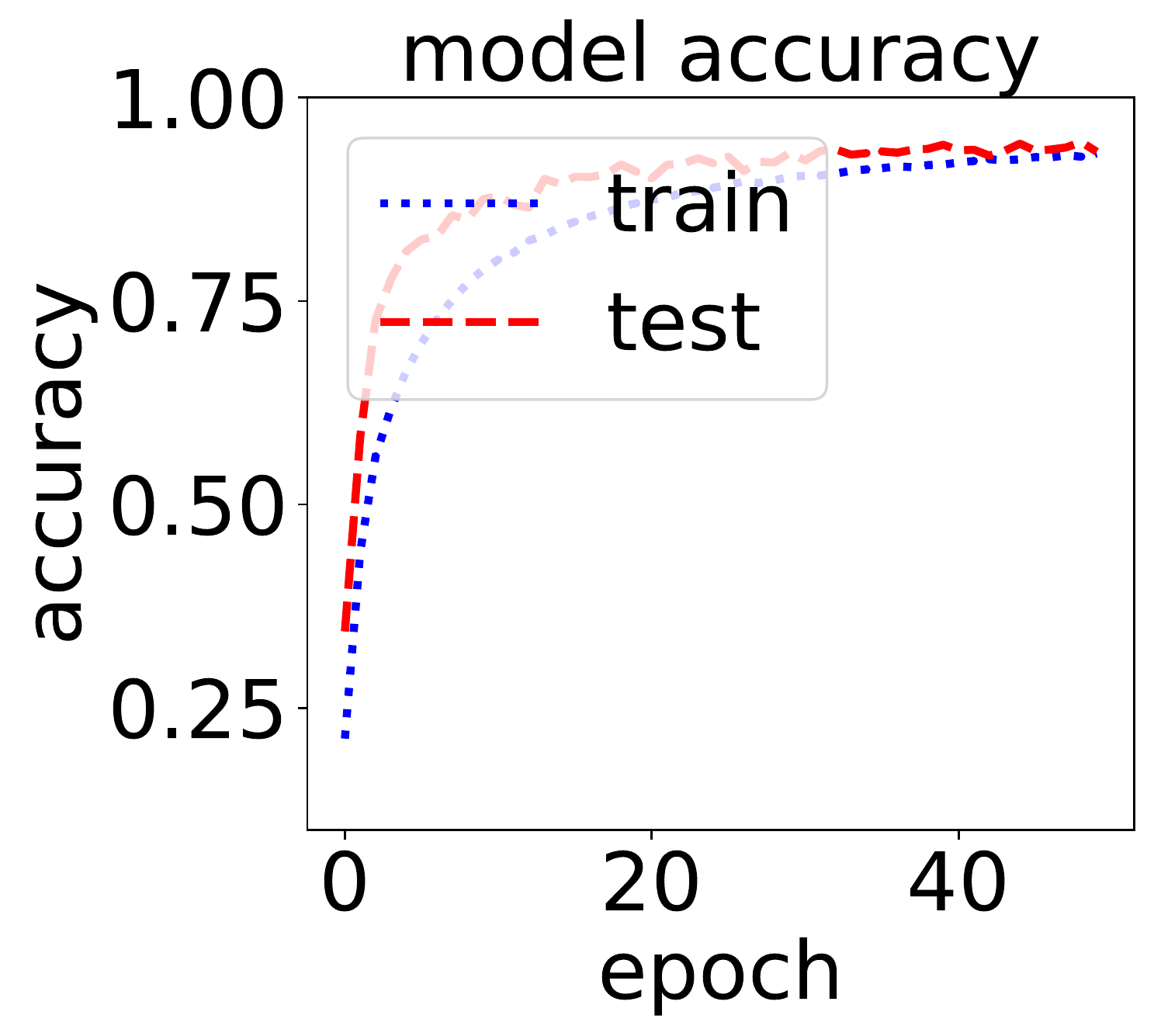}
			\caption{SVHN local model performance of a randomly selected client (refer to CNN in Figures~\ref{ldpflarchitecture} and~\ref{ldpflcifar10})}\label{svhnmodelperformance}
		\end{figure}
	\end{minipage}
	\hfill
	\begin{minipage}[c]{0.62\linewidth}
		\vspace{-0.7cm}
		\begin{figure}[H] 
			\centering
			\subfloat[2 clients]{\includegraphics[width=0.24\textwidth, trim=0cm 0cm 0cm 0cm]{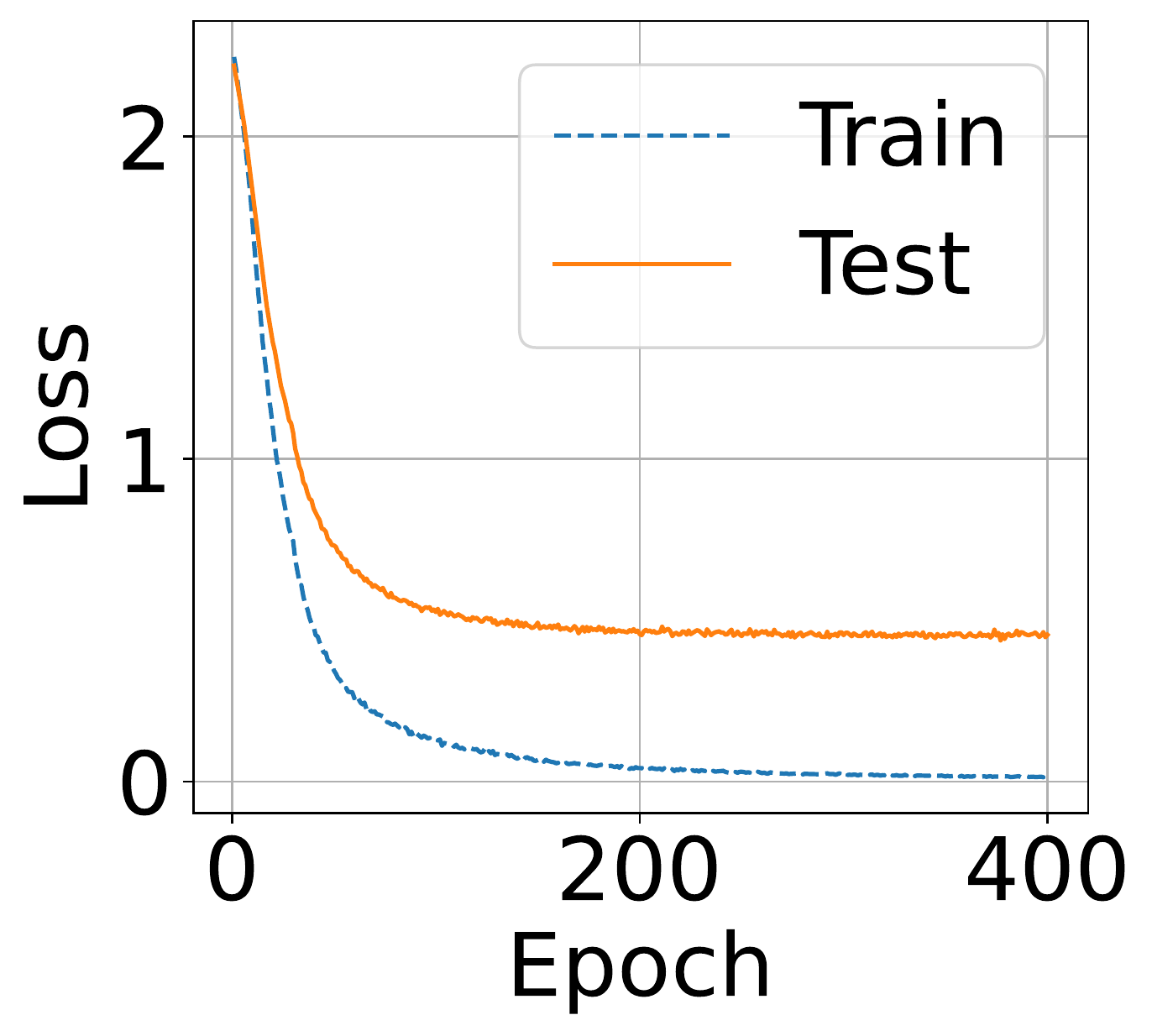}\label{svhnloss2clients}}
			\vspace{1mm}
			\subfloat[10 clients]{\includegraphics[width=0.24\textwidth, trim=0.3cm 0cm 0cm 0cm]{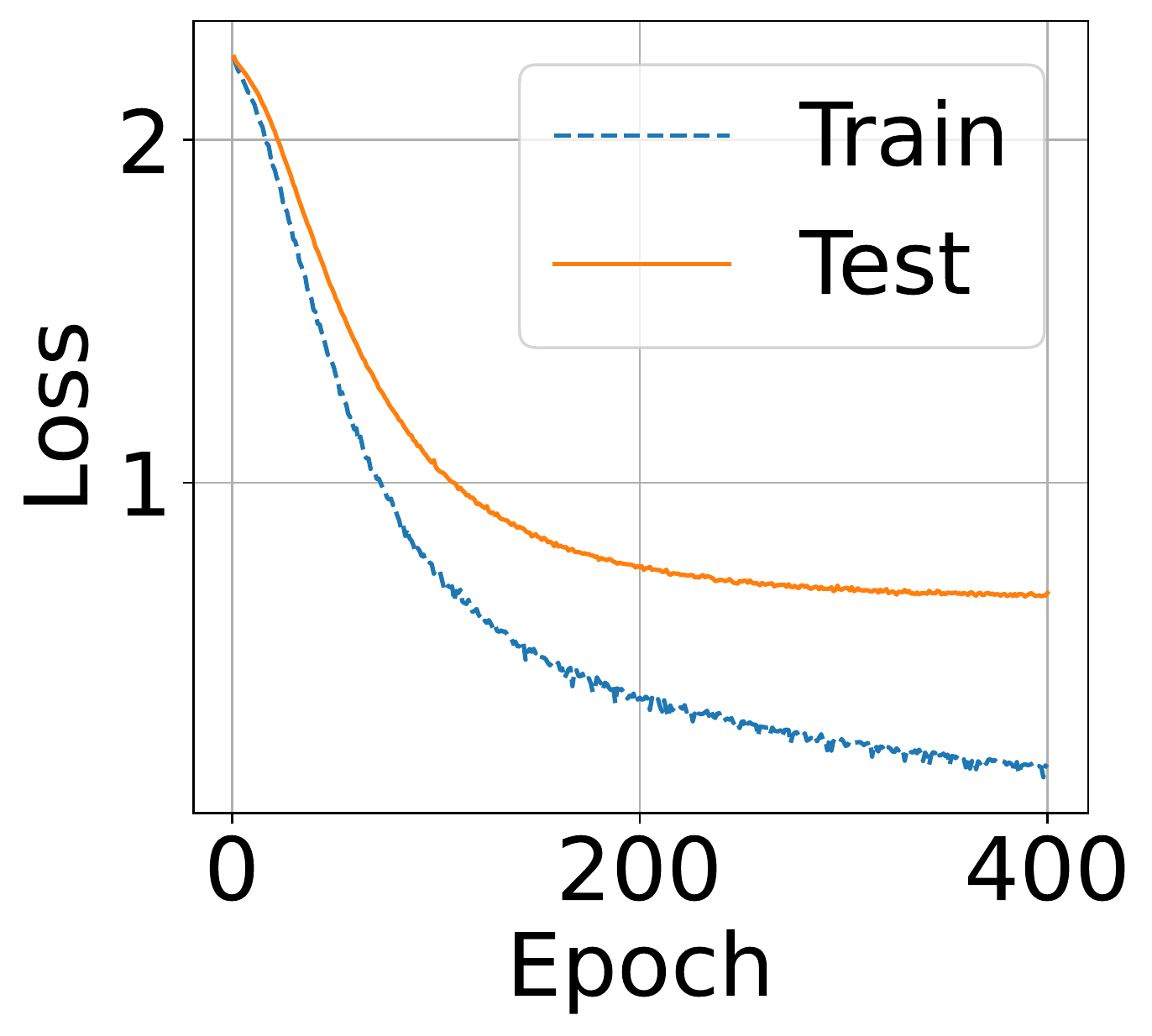}\label{svhnloss10clients}}
			\hfill
			\subfloat[20 clients]{\includegraphics[width=0.24\textwidth, trim=0cm 0cm 0cm 0cm]{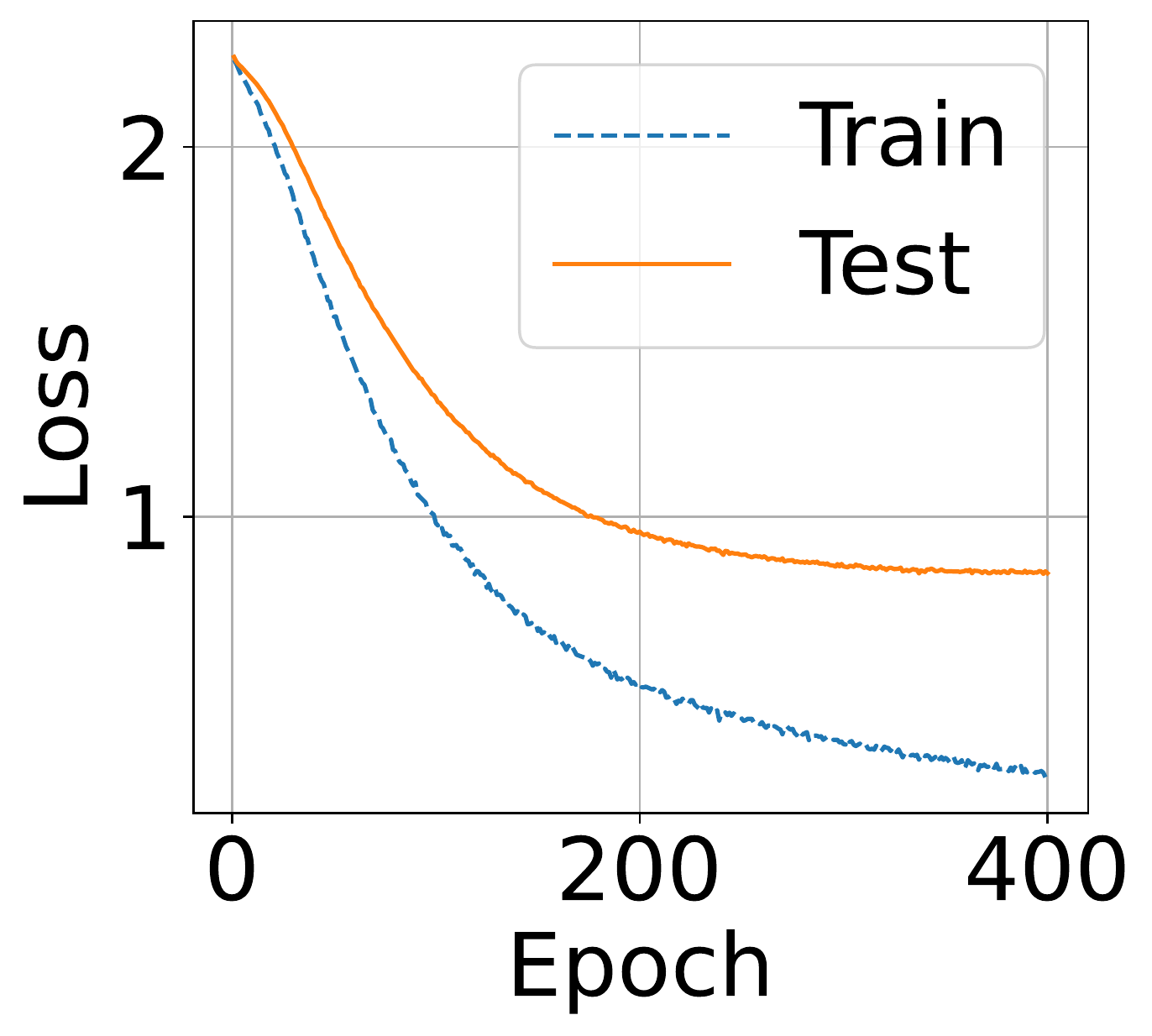}\label{svhnloss20clients}}
			\vspace{1mm}
			\subfloat[50 clients]{\includegraphics[width=0.25\textwidth, trim=0.3cm 0cm 0cm 0cm]{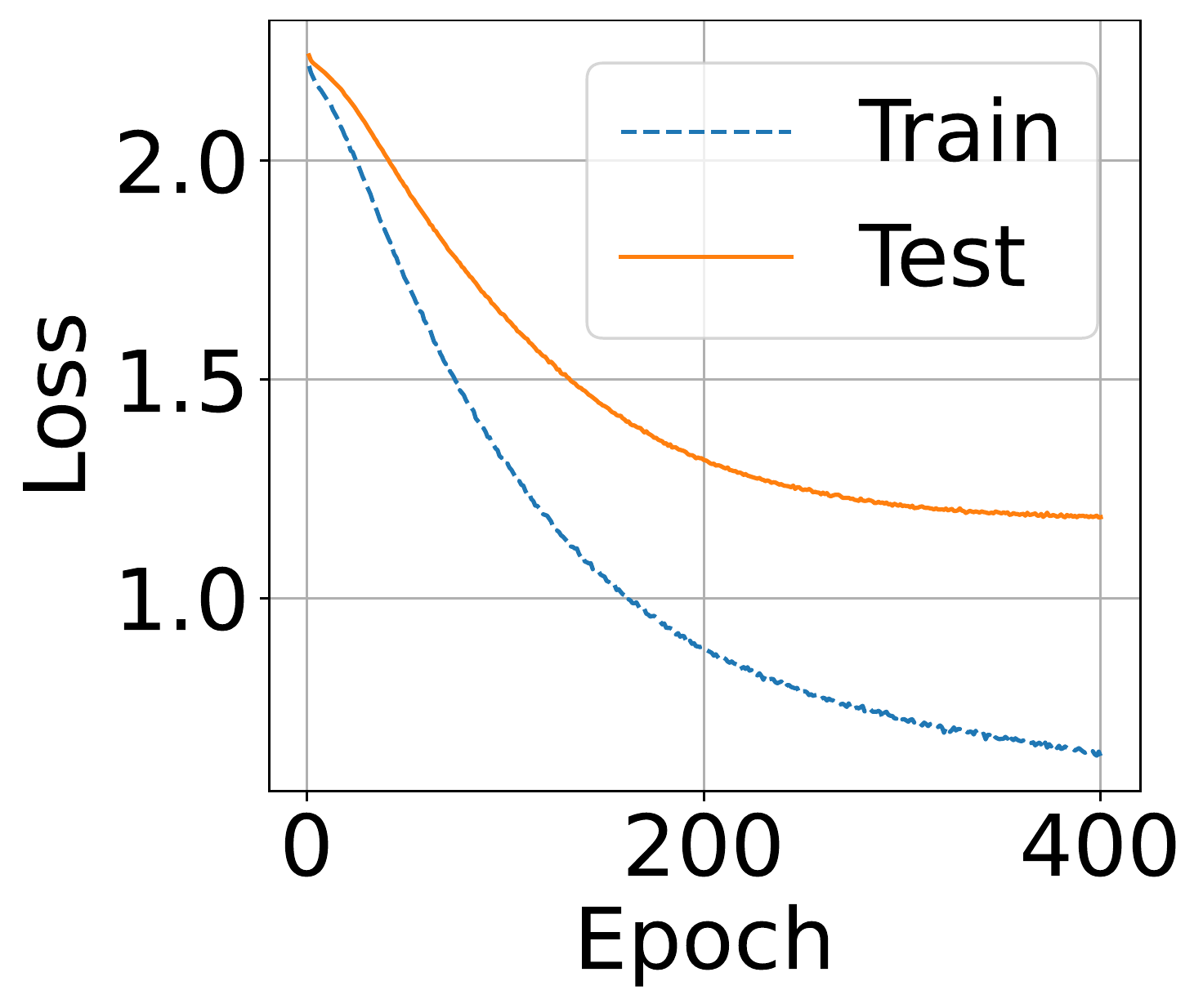}\label{svhnloss50clients}}
			\caption{SVHN global model (refer to GM in Fig.~\ref{ldpflarchitecture} and~\ref{ldpflcifar10}) loss under deferent numbers of clients}
			\label{svhnldpflmodelloss}
		\end{figure}
	\end{minipage}
	\vspace{-0.5cm}
	\begin{figure}[H] 
		\centering
		\scalebox{0.74}{
			\subfloat[2 clients]{\includegraphics[width=0.22\textwidth, trim=0cm 0cm 0cm 0cm]{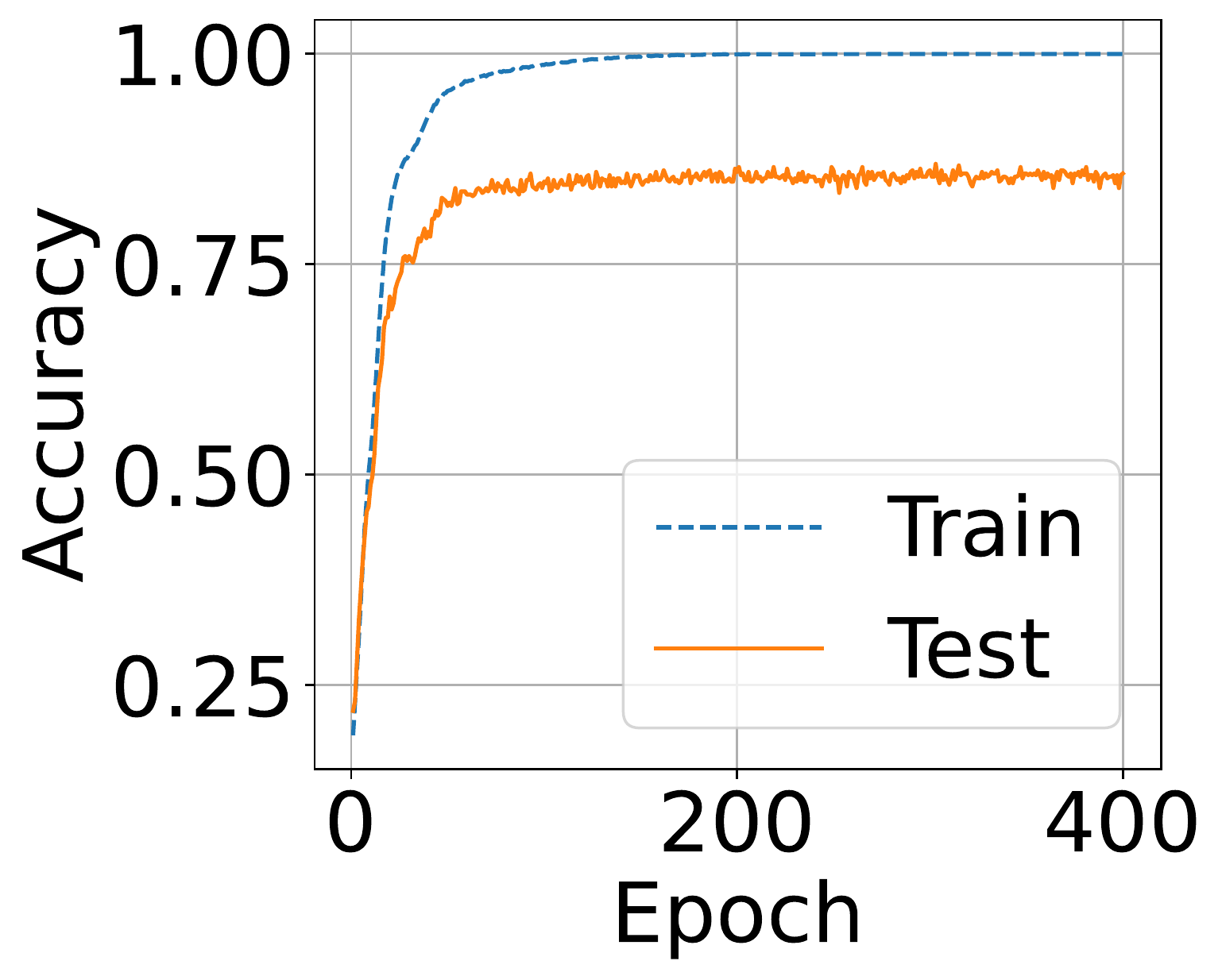}\label{svhnaccuracy2clients}}
			\vspace{1mm}
			\subfloat[10 clients]{\includegraphics[width=0.22\textwidth, trim=0.3cm 0cm 0cm 0cm]{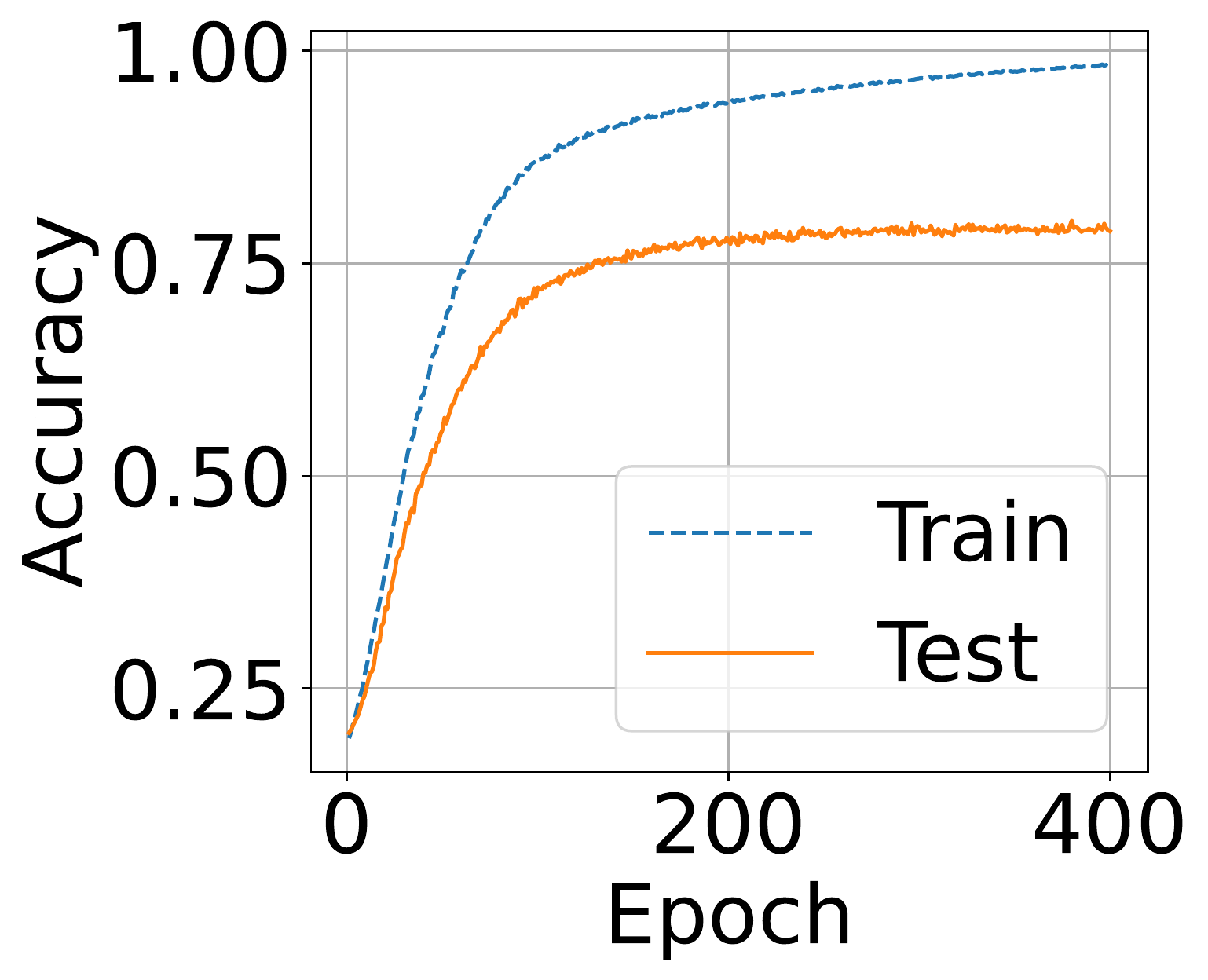}\label{svhnaccuracy10clients}}
			\vspace{1mm}
			\subfloat[20 clients]{\includegraphics[width=0.22\textwidth, trim=0cm 0cm 0cm 0cm]{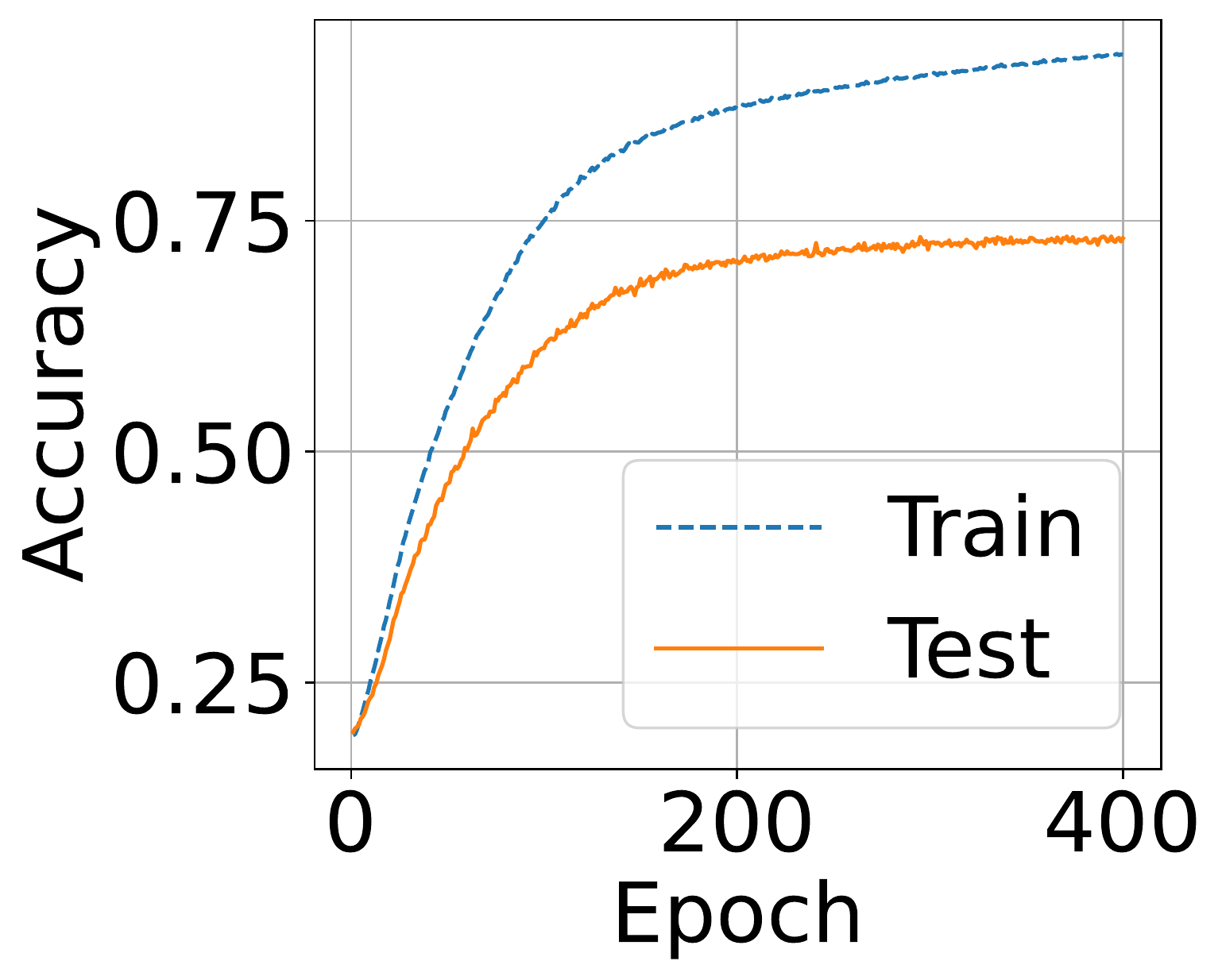}\label{svhnaccuracy20clients}}
			\vspace{1mm}
			\subfloat[50 clients]{\includegraphics[width=0.21\textwidth, trim=0.3cm 0cm 0cm 0cm]{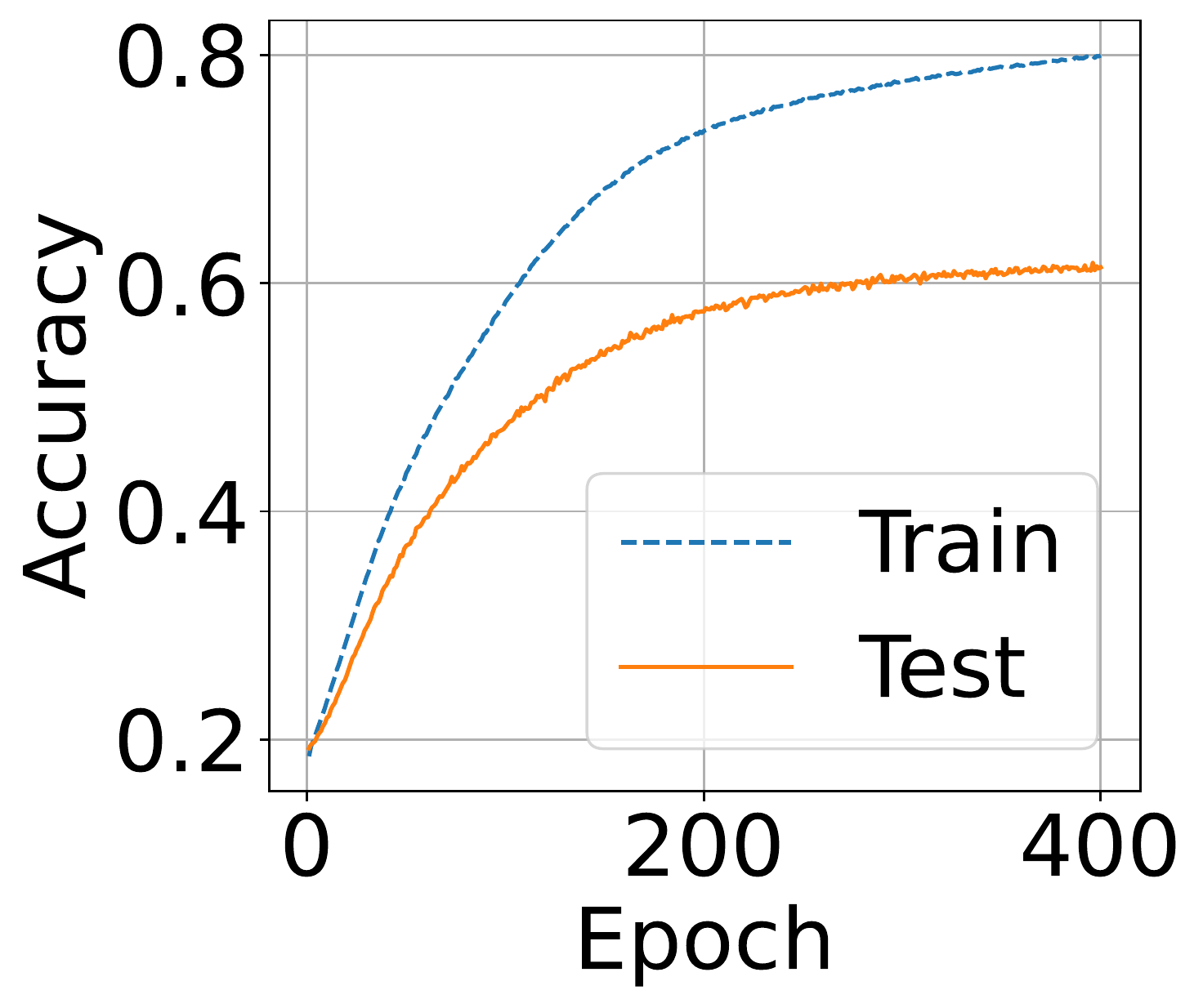}\label{svhnaccuracy50clients}}
		}
		\caption{SVHN global model (refer to GM in Figures~\ref{ldpflarchitecture} and~\ref{ldpflcifar10}) accuracy under deferent numbers of clients}
		\label{svhnldpflmodelaccuracy}
	\end{figure}
	\vspace{-1cm}
	
	\begin{multicols}{2}
		\begin{figure}[H]
			\centerline{\includegraphics[scale=0.11]{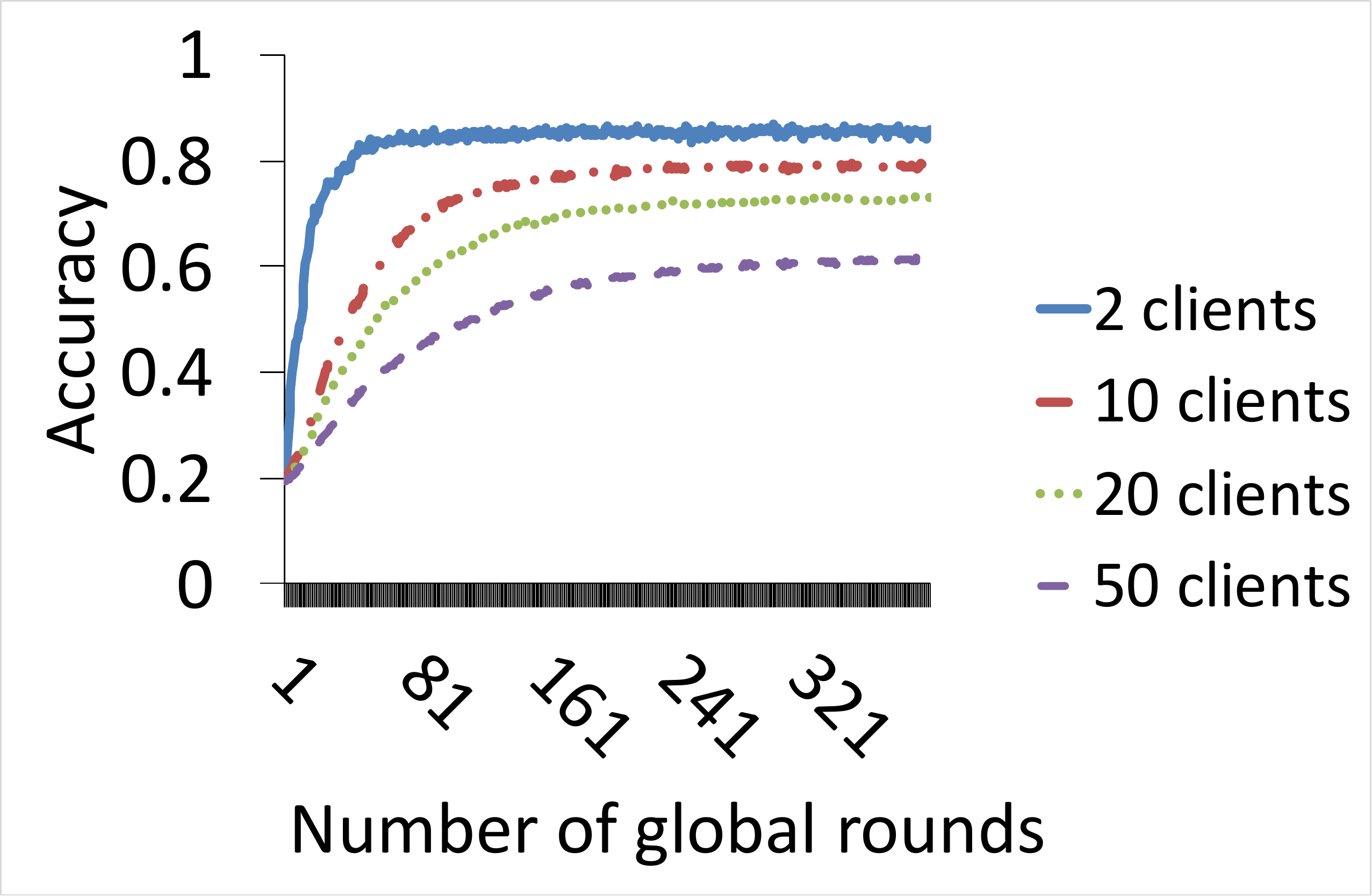}}
			\caption{SVHN Global model test accuracy comparison under different number of clients}
			\label{svhnmodelmulticlient}
		\end{figure}
		\hfil
		\begin{figure}[H]
			\centerline{\includegraphics[scale=0.11]{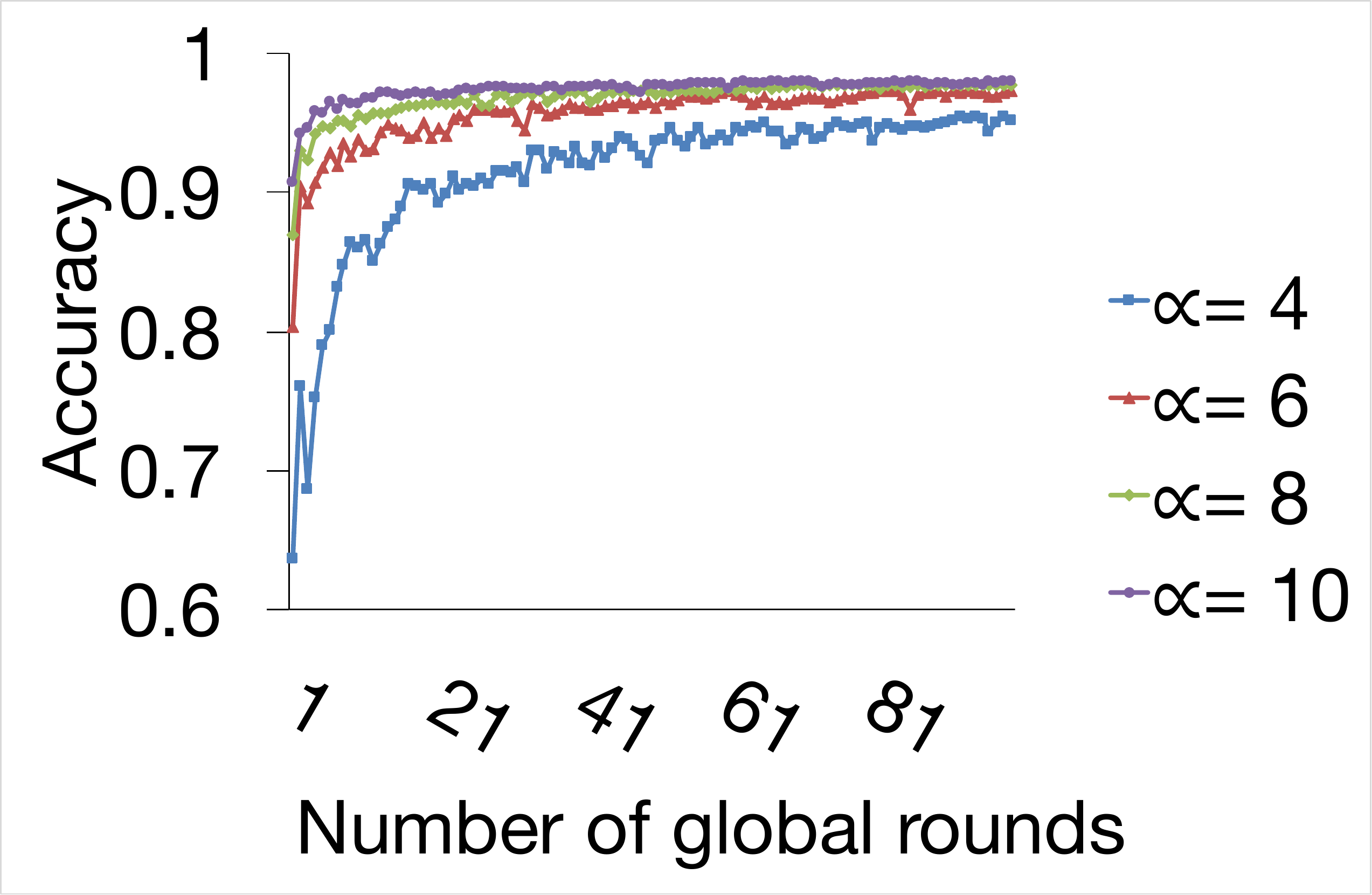}}
			\caption{MNIST Global model test accuracy comparison under different $\alpha$ values}
			\label{mnist_global_accuracy_comparison}
		\end{figure}
	\end{multicols}
	\vspace{-1.6cm}
	\subsubsection{Conducting federated learning over randomized data}
	\label{fedlearn}
	The part of the Figures~\ref{ldpflminst} and~\ref{ldpflcifar10} enclosed by the blue dotted square shows the configurations of the DNNs used in the FL setup of LDPFL. Under MNIST, all the clients use Adam (betas=(0.9, 0.999), eps=1e-08) optimizer, whereas, under CIFAR10, FMNIST, and SVHN, all the clients use stochastic gradient descent (SGD) optimizer for local model learning with a learning rate of 0.001 and a batch size of 32. Each client runs the local DNNs for 50 epochs and sends the trained parameters to the server. One round of FL includes executing client model training for 50 epochs, model federation, and model state update with federated parameters. We conduct different numbers of FL rounds sufficient to show the convergence patterns under each dataset based on the size. For FMNIST, we run FL for 80 rounds to replicate the settings of a previous study~\cite{truex2020ldp}, which we utilize for benchmarking.
	
	\vspace{-0.4cm}
	\subsection{LDPFL model performance}
	\vspace{-0.2cm}
	Figures~\ref{MNISTLDPFLperformance},~\ref{CIFAR10LDPFLperformance}, and~\ref{FMNISTLDPFLperformance} show the performance of the final LDPFL models under MNIST, CIFAR10, and FMNIST, respectively. LDPFL generates good model performance under both datasets. The global model performs well when the client models perform well, as evident from the plots. As LDPFL uses the fully trained CNN to generate a subsequent training dataset for the DP federated learning step of LDPFL, a good CNN client model enables producing a global model with good performance. As shown in the third sub-figure of Fig.~\ref{CIFAR10LDPFLperformance}, the client DNN is unable to generalize when the FL module is disabled, highlighting the importance of the LDPFL protocol. This shows that although the clients have good performing local CNN models, the client DNNs cannot generalize to learn features from other distributed entities without FL.   Figures~\ref{svhnldpflmodelloss} and~\ref{svhnldpflmodelaccuracy} show the model loss and accuracy convergence of LDPFL under different numbers of clients (under the SVHN dataset). Fig.~\ref{svhnmodelmulticlient} provides a comparison of the testing accuracy convergence of the LDPFL model under different client numbers (under the SVHN dataset). The higher the number of clients, the higher the time necessary for model convergence. We can also notice that the accuracy decreases when the number of clients increases, which reduces the total number of tuples within each client, producing CNNs with slightly less model performance. Consequently, each client applying LDP locally while maintaining local data representations can entail high randomization diversity. However, as shown in the plots, LDPFL provides a better approach to maintaining utility under complex datasets than other LDP approaches for lower privacy budgets when there are many clients. This is due to the clients in LDPFL maintaining the local data distributions by utilizing a locally converged model (with good performance) on the input data. Fig.~\ref{imbalancedata} shows the LDPFL performance under highly imbalanced data (the non-IID setting). According to the plots (refer to \ref{imbalancedata}), it is apparent that LDPFL follows (with reduced accuracy due to data randomization from DP) the convergence pattern of vanilla FL, confirming that the LDPFL algorithm does not impact the basic flow of the FL protocol. Fig.~\ref{mnist_global_accuracy_comparison} shows the performance of LDPFL under different levels of the privacy budget coefficient ($\alpha$). LDPFL takes slightly more time to converge with a slightly reduced accuracy when $\alpha$ is small. This is due to reduced $\alpha$ forcing LDPFL to increase the data randomization levels.

	\subsubsection{Performance comparison of LDPFL against existing approaches}
	For the performance comparison, we followed the benchmarking used in a previous study~\cite{truex2020ldp} on an approach named LDP-Fed that imposes $\alpha$-CLDP (a generalization of LDP~\cite{gursoy2019secure}) on federated learning. We compare the results of LDPFL against 4 previous approaches; (1) Non-private, (2) Secure multi-party computing (SMC)~\cite{bonawitz2017practical,truex2020ldp}, (3) Differentially private stochastic gradient descent (DPSGD)~\cite{abadi2016deep,truex2020ldp}, and (4) $\alpha$-Condensed Local Differential Privacy for Federated Learning ($\alpha$-CLDP-Fed)~\cite{truex2020ldp,gursoy2019secure}. These four approaches consider the k-Client selection protocol in which nine client updates will be considered for the federation in every round~\cite{truex2020ldp}. For benchmarking~\cite{truex2020ldp} set $\alpha$ of $\alpha$-CLDP-Fed to 1.0, and the privacy parameters (e.g., $\varepsilon$ and $\delta$) of the other three approaches are set accordingly to match with $\alpha=1.0$~\cite{truex2020ldp,gursoy2019secure}. We use the same default privacy parameters explained in Section~\ref{difprivdata} for LDPFL (refer to Section~\ref{difprivdata} for the primary factors that influence value assignments for the privacy parameters). The accuracy was generated on the FMNIST dataset. For LDPFL, we considered nine randomly chosen client updates out of 10. The model convergence of LDPFL for FMNIST is shown in Figures~\ref{FMNISTLDPFLperformance}. The accuracy values in Table~\ref{rescomp} are generated after 80 rounds of the federation. As shown in the table, LDPFL generates the second-highest accuracy. However, compared to LDPFL, $\alpha$-CLDP-Fed enforces a generalized form of LDP. Hence, LDPFL enforces the strictest privacy levels on the global model compared other four approaches (in Table~\ref{rescomp}), concluding that LDPFL delivers an overall better performance by providing a better balance between privacy and utility.
	\vspace{-0.8cm}
	\begin{table}[H]
		\caption{Comparison of LDPFL against the existing methods. \textbf{NA}: Not available, \textbf{ND}: Not defined, \textbf{Basic}: general DP (GDP), \textbf{Moderate}: not as strong as LDP but a generalization of LDP, which is better than general DP, \textbf{High}: satisfies strong LDP guarantees, \textbf{RQ}: Required, \textbf{NR}: Not required.}
		\centering
		\resizebox{0.9\columnwidth}{!}{
			\begin{tabular}{|l|l|l|l|l|l|}
				\hline
				\textbf{Method} & \textbf{\begin{tabular}[c]{@{}l@{}}Efficiency\\ (compared to \\ baseline)\end{tabular}} & \textbf{Privacy Model} & \textbf{\begin{tabular}[c]{@{}l@{}}Privacy Model\/ \\strength\end{tabular}} & \textbf{\begin{tabular}[c]{@{}l@{}}Trusted Party\\ Requirement\end{tabular}} & \textbf{\begin{tabular}[c]{@{}l@{}}Accuracy\\ (after 80 \\ rounds with\\ 9 client updates\\ every round)\end{tabular}} \\ \hline
				Non-private     & Baseline                                                                                & NA &            NA         & RQ                                                                           & $\sim$90\%                                                                                                             \\ \hline
				SMC             & Low                                                                                     & NA & ND                     & RQ                                                                           & $\sim$90\%                                                                                                             \\ \hline
				DPSGD           & High                                                                                    & ($\varepsilon$, $\delta$)-DP     & Basic         & RQ                                                                           & $\sim$80\%                                                                                                             \\ \hline
				$\alpha$-CLDP-Fed     & High                                                                                    & $\alpha$-CLDP   & Moderate             & NR                                                                           & $\sim$85.28\% - 86.93\%                                                                                                \\ \hline
				LDPFL           & High                                                                                    & $\varepsilon$-LDP   & High               & NR                                                                           & $\sim$81\%                                                                                                             \\ \hline
			\end{tabular}
		}
		\label{rescomp}
	\end{table}
	
	\vspace{-1cm}
	\section{Related Work}
	\vspace{-0.1cm}
	\label{relwork}
	Privacy-preserving approaches for FL can be broadly categorized into encryption-based (cryptographic)~\cite{bonawitz2017practical} and data modification-based (perturbation)~\cite{wei2020federated}. Cryptographic approaches look at how secure aggregation of parameters can be conducted at the FL server. The most widely adapted cryptographic approach for secure aggregation is secure multi-party computation (MPC)~\cite{fereidooni2021safelearn}. MPC enables the secure evaluation of a function on private data (also called secret shares) distributed among multiple parties who do not trust each other~\cite{bonawitz2017practical}. The requirement of a trusted party (e.g., VerifyNet~\cite{xu2019verifynet}, and VeriFL~\cite{guo2020v}) or the requirement of a considerably high number of communications (e.g., Bonawitz et al.'s approach~\cite{bonawitz2017practical} and Bell et al.'s approach~\cite{bell2020secure}) are two of the fundamental problems of most of the existing MPC approaches for FL~\cite{bonawitz2017practical}. Besides, the existing MPC approaches show vulnerability towards advanced adversarial attacks such as backdoor attacks~\cite{bagdasaryan2020backdoor}. Homomorphic encryption (HE) is the other frequently adapted cryptographic approach for the secure aggregation of parameters in FL. HE enables algebraic operations over encrypted data to produce a ciphertext that can be decrypted to obtain the algebraic outcome on the original plaintext with security and privacy~\cite{gentry2009fully}. However, scalability has been a major challenge in HE. The latest approaches, such as BatchCrypt, try to introduce less complex HE-based solutions for secure FL parameter aggregation~\cite{zhang2020batchcrypt}. Besides, the distributed setting makes HE infeasible for large-scale scenarios due to the low efficiency~\cite{yang2019federated,so2021turbo}. Both global differential private (GDP) approaches ~\cite{geyer2017differentially,asoodeh2021differentially} and local differential private (LDP)~\cite{truex2020ldp,seif2020wireless} approaches were introduced to FL~\cite{wei2020federated}. GDP approaches focus on privately learning the algorithm (e.g., SGD)~\cite{geyer2017differentially,mcmahan2017learning}, whereas LDP approaches~\cite{truex2020ldp,seif2020wireless} focus on randomizing the data inputs to the algorithm (it can be the direct randomization of user inputs or randomization of the model parameters before sending them to the aggregator) to learn on randomized data. Robin et al.'s approach~\cite{geyer2017differentially} and Asoodeh et al.'s approach~\cite{asoodeh2021differentially} are two of the GDP approaches for FL, whereas LDP-Fed~\cite{truex2020ldp} and Seif et al.'s approach~\cite{seif2020wireless} are two LDP approaches. The primary issue of most GDP approaches is the requirement of a trusted aggregator. These approaches focus more on privacy leaks among the FL clients~\cite{geyer2017differentially,asoodeh2021differentially}. By either randomizing user inputs or parameters before sending them to the aggregator, LDP-based approaches provide a stricter privacy setting~\cite{seif2020wireless,truex2020ldp}. However, existing LDP approaches often consume unreliable privacy budgets ($\varepsilon$) to produce good accuracy, work on generalized LDP guarantees (e.g., $\alpha$-CLDP ), or do not produce a high accuracy compared to GDP approaches. Hence, there is a significant imbalance between the privacy and utility of LDP approaches. Developing new LDP approaches, such as LDPFL, is essential to answer these challenges.
	\vspace{-0.2cm}
	\section{Conclusion}
	\vspace{-0.2cm}
	\label{conclusion}
	We proposed a utility-enhancing, differentially private federated learning approach (abbreviated as LDPFL) for industrial (cross-silo) settings. LDPFL uses local differential privacy (LDP) to enforce strict privacy guarantees on FL. The proposed approach provides high testing accuracy (e.g., 98\%) under strict privacy settings (e.g., $\varepsilon=0.5$). LDPFL preserves data utility by using a fully trained local model to filter and flatten the input features. The LDPFL's LDP model enables high utility preservation by randomizing one half of the binary string differently from the other half, ensuring a high bit preservation during binary string randomization. The LDP approach of LDPFL also allows federated learning under untrusted settings (e.g., with untrusted clients and an untrusted server) while preserving high privacy and utility. Besides, benchmarking suggests that LDPFL is preferred when high utility is required under strict privacy settings (maintaining a proper balance between privacy and utility). 
	
	\newpage
	
	\section*{Acknowledgment}
	The work has been supported by the Cyber Security Research Centre Limited whose activities are partially funded by the Australian Government’s Cooperative Research Centres Programme.
	
	\section*{Appendices}
	
	\subsection*{Appendix A: Model configurations}
	\label{cnnconfig}
	\vspace{-0.8cm}
	\begin{figure}[H] 
		\centering
		\scalebox{1.1}{
			\subfloat[LDPFL architecture used for the CIFAR10 and SVHN datasets.  Act = Activation, BatchNorm = Batch normalization, RND layer= Randomization layer, FL = Federated learning. \textbf{Note}: When the input dataset is FMNIST, the CNN layer 1 size was changed to 28x28x1, and the DNN and GM layer 1 sizes were changed to 11520.]{\includegraphics[width=0.5\textwidth, trim=0.3cm 0cm 0cm 0cm]{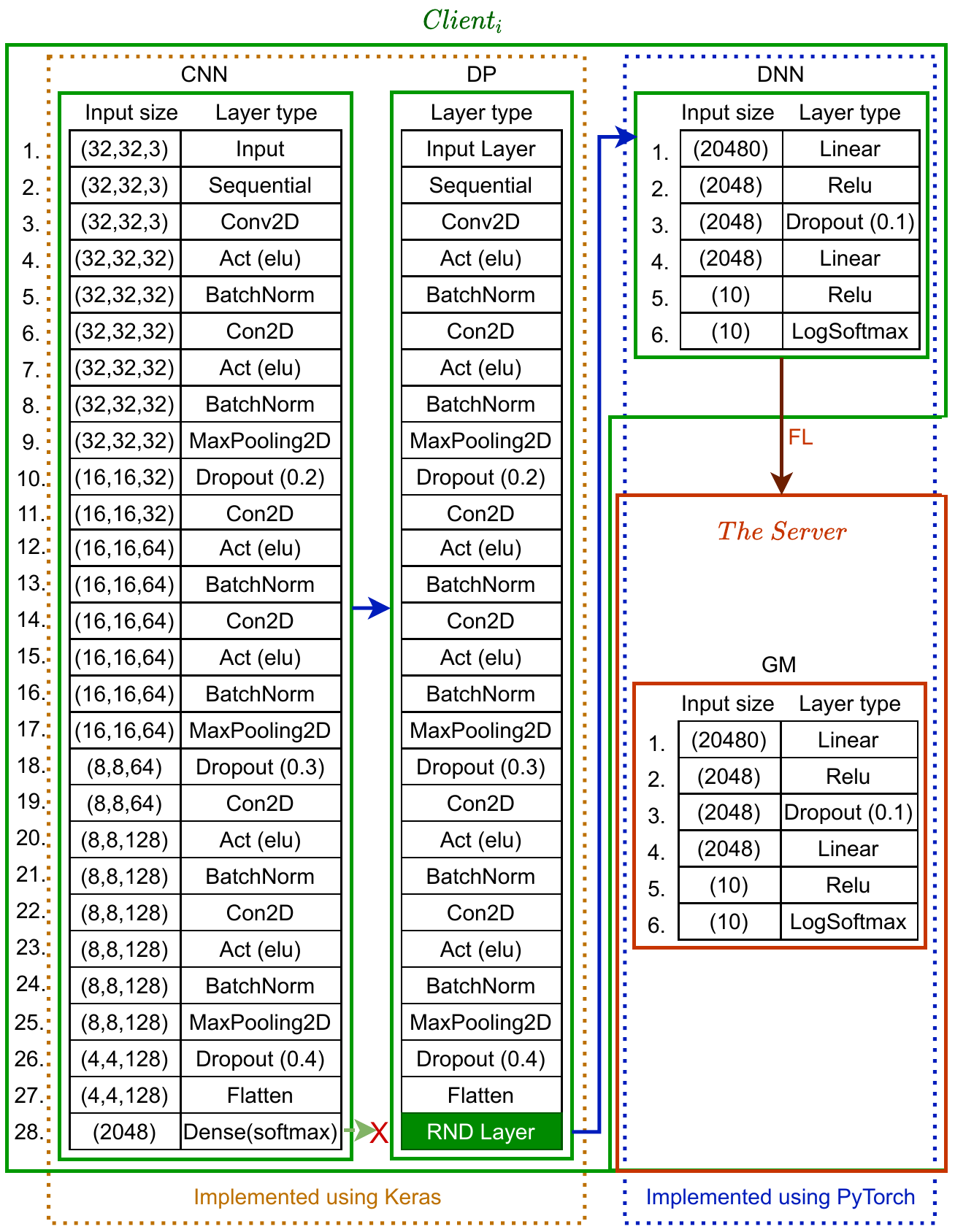}\label{ldpflcifar10}}
			\vspace{4mm}
			\subfloat[LDPFL architecture used for the MNIST dataset. Act = Activation, BatchNorm = Batch normalization, RND layer= Randomization layer, FL = Federated learning.]{\includegraphics[width=0.4\textwidth, trim=0cm 0cm 0cm 0cm]{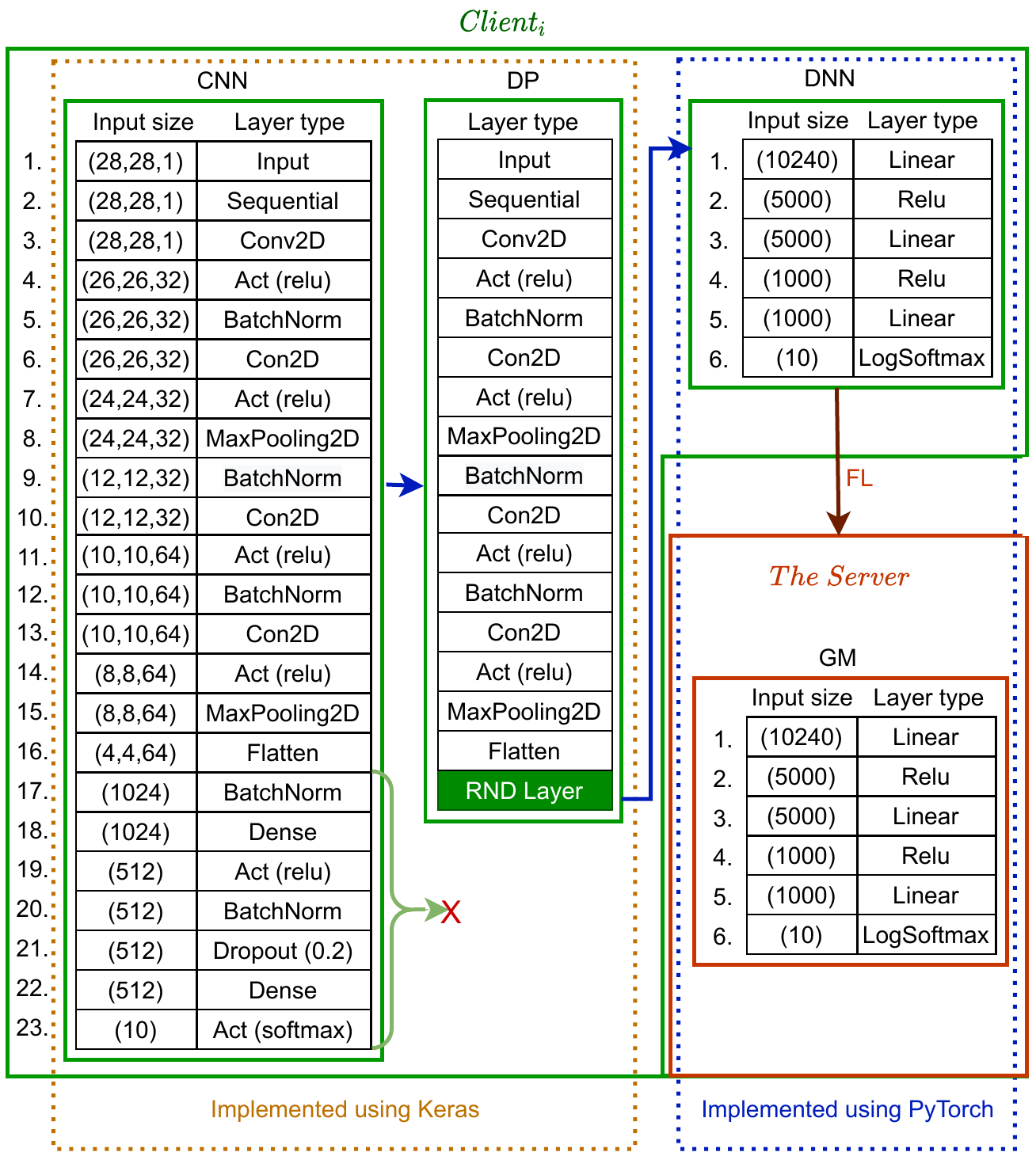}\label{ldpflminst}}
		}
		\caption{LDPFL architectures used for the datasets}
		\label{archiconfig}
	\end{figure}
	
	The images in the MNIST dataset have a resolution of 28x28x1 (one channel), which are size-normalized and centered~\cite{lecun1998gradient}. Hence, the input layer size of the CNN used for MNIST is 28x28x1 (refer to Fig.~\ref{ldpflminst}. Convolution layers no. 3 and no. 6 use 32, 3 $\times$ 3 filters with stride 1, whereas convolution layers no.10 and no.13 use  64, 3 $\times$ 3 filters with stride 1. We used a kernel regularizer of regularizers.l2(weight\_decay = 1e-4) for all convolution layers. Both max-pooling layers (layers no. 8 and no.15) use 2$\times$2 max pools. All batch normalization layers (layer numbers 5,9,12,17, and 20) use ``axis=-1''.
	
	The images in the CIFAR10 and SVHN datasets have a resolution of 32x32x3, which are size-normalized and centered~\cite{lecun1998gradient}. Hence, the input layer size of the CNNs used for CIFAR10 and SVHN is 32x32x3 (refer to Fig.~\ref{ldpflcifar10}). Convolution layers no. 3 and no. 6 use 32, 3 $\times$ 3 filters with stride 1, convolution layers no.11 and no.14 use  64, 3 $\times$ 3 filters with stride 1, and convolution layers no.19 and no.22 use  128, 3 $\times$ 3  filters with stride 1. All three max-pooling layers (layers no. 9, no.17, and no.25) use 2$\times$2 max pools. The image resolution of FMNIST images is 28x28x1. Hence, only the input layer size of the CNN (refer to Fig.~\ref{ldpflcifar10}) was changed to 28x28x1 while keeping all other settings of the local CNN architecture unchanged.

\end{document}